\newtheorem{theorem}{Theorem}[section]
\newtheorem{lemma}[theorem]{Lemma}
\newtheorem{proposition}[theorem]{Proposition}
\newcommand{\calR}{R}
\begin{document}
%
\begin{frontmatter}

\title{Bounding the number of reticulation events for displaying multiple trees in a phylogenetic network }
\author[label1]{Yufeng Wu}
\address[label1]{School of Computing, University of Connecticut, Storrs, CT 06269 USA}
\ead{yufeng.wu@uconn.edu}

\author[label2]{Louxin Zhang}
\address[label2]{Department of Mathematics and Center for Data Science and Machine Learning,\\ National University of Singapore, Singapore 119076\fnref{label4}}
\ead{matzlx@nus.edu.sg}

\begin{abstract}
 Reconstructing a parsimonious phylogenetic network that displays multiple phylogenetic trees is an important problem in phylogenetics, where the complexity of the inferred networks is  measured by reticulation numbers.
 The reticulation number for a set of trees is defined as the minimum number of reticulations in a phylogenetic network that displays those trees. A mathematical
problem is bounding the reticulation number for multiple  trees over a fixed number of
taxa.  While this problem has been extensively studied for two trees, much less is known about the upper bounds on the reticulation numbers for three or more arbitrary trees. In this paper, we present a few non-trivial upper bounds on reticulation numbers for three or more trees.
\end{abstract}
\end{frontmatter}

\section{Introduction}

Phylogenetic networks are models that extend the classic phylogenetic tree models to accommodate various reticulate evolutionary processes such as horizontal gene transfer, hybrid speciation and recombination \cite{GUSBOOK14,HRS10}. 
Phylogenetic networks are often more biologically realistic than tree models for many applications. However, despite this advantage, phylogenetic networks have not been as widely adopted in biology as phylogenetic trees. One reason for this is that, compared to phylogenetic trees, phylogenetic networks are topologically more complex. In a phylogenetic tree, each node (except for the root, which has an in-degree of zero) has exactly one incoming edge. In contrast, phylogenetic networks contain nodes, known as reticulation nodes, that have two or more incoming edges.

The more reticulation nodes there are in a network, the more complex the underlying evolutionary model becomes. The so-called reticulation number \cite{SempleGS07,HRS10} is introduced as a quantity to measure the complexity of a phylogenetic network. The reticulation number $R(N)$ for a network $N$ is equal to the sum of one less than the in-degree of each node, taken over all nodes in $N$. The smaller $R(N)$ is, the simpler $N$ is. 
Note that $R(N) = 0$ if $N$ is a tree.  Biologists usually prefer simple models, so phylogenetic networks with fewer reticulations are favored. Therefore, in this paper, we focus on the existence of phylogenetic networks with a \textit{small} reticulations number that 'display' a set of phylogenetic trees.

The evolutionary history of a single genomic region is typically modeled by a rooted tree, known as a gene tree. Various methods can be used to infer gene trees from the DNA sequences of multiple species (taxa) within a genomic region.
Now,  suppose we are given a set of gene trees $\mathcal{T} = \{ T_1, \ldots, T_m \}$. A well-known phenomenon in evolutionary biology is that gene trees from different genomic regions tend to be topologically discordant. We are interested in phylogenetic networks that are ``consistent" with these trees.  Alternatively, phylogenetic network can be  viewed as a compact representation of multiple discordant gene trees. A natural approach for reconstructing phylogenetic network for a given set of trees is finding a network $N_{opt}$ that {\it minimizes} $R(N_{opt})$ among all networks $N$ that \emph{display} each of the given gene trees $T_i$. We say $N$ displays a tree $T$ if $T$ can be obtained from $N$ by (i) discarding all but one incoming edges at each node of  $N$, (ii) deleting recursively unlabeled leaves  and (iii) contracting nodes with an in-degree of one and an out-degree of one. We call the reticulation number of $N_{opt}$ the reticulation number, denoted as $R(\mathcal{T})$, for displaying $\mathcal{T}$. See Fig. \ref{fig1} for an illustration.

\begin{figure}[!t]
\centering
\includegraphics[scale=0.6]{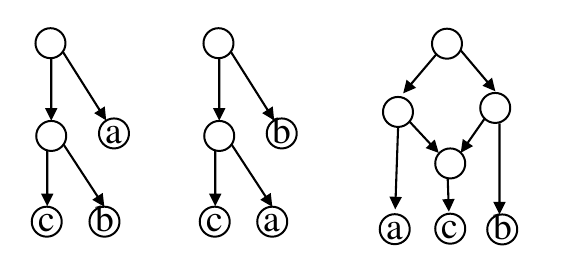}
\caption{ \label{fig1}
Two trees on taxa $a, b, c$ and a phylogenetic network with one reticulation node displaying both trees. Here, we get the left tree by removing the reticulate edge on the left. The  network has one reticulation node.
}
\end{figure}

Computing the exact $R(\mathcal{T})$ for a given $\mathcal{T}$ is known to be challenging. It is NP-hard even when $\mathcal{T}$ contains only two trees \cite{SEMPLENPC07}. There are several methods for computing the reticulation number $R( T_1, T_2 )$ for two trees $T_1$ and $T_2$ that work for trees of moderate sizes (e.g., \cite{ASCH2012,Huson2016,ZEHRET13,WUWANG2010}). For the general case (with two or more trees), there are heuristics for reconstructing networks that work for relatively small trees \cite{ZW2012,WUMIRRN16,WURN10,WURNJCB13}. There are also methods that make simplifying assumptions on network topology. One of the simpler network models is the so-called tree-child network \cite{Cardona_09b}. There are several methods that can construct tree-child networks from the given gene trees (e.g., \cite{Iersel2022,LXZ23}). One main limit of these methods is that there are often little mathematical properties that can be said about the reconstructed networks. There are also methods that can compute the \emph{bounds} of $R(\mathcal{T})$ \cite{WURN10,TCLBJCB23}. While these bounds 
provide the ranges of values of $R(\mathcal{T})$, the bounds are computational rather than mathematical: they are computed for a specific $\mathcal{T}$ and do not offer a lower/upper bound on the reticulation number for an arbitrary set of trees.

An interesting mathematical question is: How small can $R(\mathcal{T})$ be for displaying a set $\mathcal{T}$ of $m$ arbitrary trees on $n$ taxa? 
This problem seems to be intuitive but is very challenging. When $\mathcal{T} = \{T_1, T_2\}$, it is well known that $R( T_1, T_2  ) \le n-2$ where $n$ is the number of taxa \cite{BGMS05}. It was shown in \cite{BGMS05} that the $n-2$ bound is tight because two  ``conjugate" caterpillar trees on $n$ taxa require $n-2$ reticulations to be displayed in any phylogenetic network. See Section \ref{sectPrelimi} for the definition of conjugate caterpillar trees. Surprisingly, little has been known for the more general case of three or more trees ($m = |\mathcal{T}| \ge 3$).
Only a trivial upper bound $(m-1)(n-2)$ on $R(\mathcal{T})$ is known \cite{THREETREES23,TCLBJCB23}.   For the special case of \emph{three} trees, there is a recent paper \cite{THREETREES23} that establishes a \emph{lower bound} on the upper bound of $R(\mathcal{T})$, which is asymptotically close to $\frac{3n}{2}$. More 
specifically, there exists three trees with $n$ taxa such that any network displaying them contains at least roughly $\frac{3n}{2}$ reticulations \cite{THREETREES23}. However, it is open  whether the trivial upper bound of  $2(n-2)$ for the case of three trees is tight or not \cite{THREETREES23}. 

In this paper, we present several upper bounds on $R(\mathcal{T})$ for displaying multiple trees $\mathcal{T}$. More specifically, our contributions include:

\begin{enumerate}
\item A simple  relation between the structure of multiple trees and the reticulation number of these trees, and a simple observation on displaying two trees.
\item An upper bound $2n-2-\Theta(\log\log(n))$ for three trees that reduces the trivial $2(n-2)$ bound  slightly by $\Theta(\log\log(n))$. 
\item An upper bound for any $m$ trees that 
is significantly better than the trivial bound $(m-1)(n-2)$ when $m$ is large (Theorem~\ref{theoremLargeNumTrees}). 
\end{enumerate}

To the best of our knowledge, these are the first non-trivial bounds on the reticulation number for three or more trees. Although the gap between our bounds and the trivial bounds is not large, our results demonstrate the possibility of deriving bounds on the reticulation number that improve upon the trivial bounds for displaying multiple trees.


\section{Preliminaries}
\label{sectPrelimi}

A gene tree is assumed to be a rooted and binary phylogenetic tree,  with leaves uniquely labeled by taxa and edges oriented away from the root.   In this paper, we will consider the phylogenetic networks that display a set of gene trees on the same set of taxa.

A phylogenetic tree is called a caterpillar tree if every internal node has at least one leaf as its child. Caterpillar trees are also known as line tree.  In a caterpillar tree $C$ on a set $\mathcal{X}$ of $n$ taxa, all $n-1$ non-leaf nodes (called internal nodes) are at different depths and are connected into a directed path from the root $r$ to the deepest internal node:
$$r=v_1 \rightarrow v_2 \rightarrow \cdots \rightarrow v_{n-1}.$$
Let $\ell_i$ be the leaf child of $v_i$ for $i\leq n-2$ and the children of $v_{n-1}$ be $\ell_{n-1}$ and $\ell_n$. 
Then, $C$ is uniquely determined by 
the permutation string $\ell_1\ell_2\cdots \ell_{n-2}$ if $\mathcal{X}$ is given, as  $\{\ell_{n-1}, \ell_{n}\}=\mathcal{X}\setminus \{\ell_1, \ell_2, \cdots, \ell_{n-2}\}$.
Consequently,  we use $C(\ell_1\ell_2\ell_3\cdots \ell_{n-2}\{\ell_{n-1}\ell_n\})$ to denote the caterpillar tree $C$. The caterpillar trees $C(\ell_1\ell_2\ell_3\cdots \ell_{n-2}\{\ell_{n-1}\ell_n\})$ and 
$C(\ell_{n}\ell_{n-1}\ell_{n-2}\cdots \ell_{3}\{\ell_{2}\ell_1\})$ are said to be conjugate. For example, the two caterpillar trees in Fig.~\ref{fig1} are conjugate. The two caterpillar trees in Fig.~\ref{fig2} are also conjugate. 

A phylogenetic network $N$ is a leaf-labeled rooted directed acyclic graph. The leaves are the nodes with an out-degree of zero.   
The tree nodes are the nodes with an in-degree of 1 and an out-degree of 2.
Reticulation nodes are those with an in-degree of 2 or more and an out-degree of 1. An edge is called a reticulate edge if it enters a reticulation node. Finally, the network $N$ contains a single node, known as the root, which has an in-degree of 0 and an out-degree 2 or more.


A phylogenetic network is tree-child
if every non-leaf node has at least one tree node or leaf as its child. For example, the two networks in Fig. \ref{fig2} are both tree-child networks. 

Let $N$ be a phylogenetic network. We use $V(N)$ and $E(N)$ to denote the set of vertices and edges of $N$. 
The reticulation number $R(N)$ is defined as: 
\begin{eqnarray}
R(N)=\sum_{v\in V(N)} (i(v)-1). \label{defRN}
\end{eqnarray}
where $i(v)$ denotes the in-degree of $v$.

\begin{figure}[!b]
\centering
\includegraphics[scale=0.6]{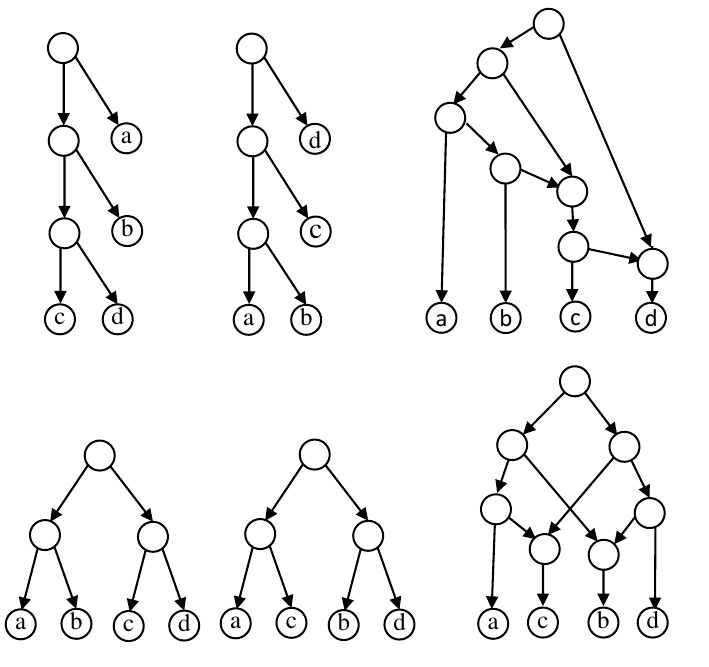}
\caption{\label{fig2}
Top: Two conjugate caterpillar trees on four taxa $\{a, b, c, d\}$ and a phylogenetic network with 2 reticulation nodes displaying them.
Bottom:  two balanced trees on the 4 taxa and a phylogenetic networks with 2 reticulation nodes displaying them.}
\end{figure}

\subsection{The two-tree case}

The reticulation number for two trees has been intensively studied. In this case,  a fundamental connection between the reticulation number and the so-called maximum acyclic agreement forest (MAAF) has been the foundation of most papers for the two-tree case (e.g., \cite{BGMS05,ZW2012,ZEHRET13}). Since MAAF is widely used, we provide only a brief introduction here. For more details, see \cite{BGMS05}.

An agreement forest of a set of rooted trees $\mathcal{T}$ is a collection of rooted binary phylogenetic trees $\mathcal{F}( \mathcal{T} ) = \{F_1, \ldots, F_k\}$ such that (i) the taxa of all $F_i$ in $\mathcal{F}( \mathcal{T} )$ form a disjoint \emph{partition} of the taxa of $\mathcal{T}$,  and (ii) the forest $\mathcal{F}( \mathcal{T} )$ can be obtained  from \emph{each} $T \in \mathcal{T}$ by deleting some branches of $T$, discarding  recursively unlabeled leaves,  and then contracting all degree-two nodes.  

The trees in an agreement forest $\mathcal{F}(\mathcal{T} )$ are called the components of $\mathcal{F}(\mathcal{T} )$. The size of $\mathcal{F}(\mathcal{T} )$ is the number of its components. An agreement forest is said to be the maximum agreement forest (MAF) for $\mathcal{T}$ if it contains the \emph{smallest} number of components among all agreement forests for the given $\mathcal{T}$.


Note that $\mathcal{F}( \mathcal{T} )$ contains the common components of trees in $\mathcal{T}$. Intuitively,  each tree $T$ in $\mathcal{T} $ can be obtained by connecting these components. When connecting components in an agreement forest to form a tree $T \in \mathcal{T}$, sometime there are topological constraints imposed on two components in the agreement forest \cite{BGMS05}.  The following is based on the definitions and observations in \cite{BGMS05}.

We let $G(\mathcal{F}(\mathcal{T}))$ be a graph where the nodes are the components $\{ \mathcal{F}_i\}$ of $\mathcal{F}(\mathcal{T})$, and there is an edge from $F_i$ to $F_j$ if there exists a tree $T\in \mathcal{T}$ within which there is a directed  path from the corresponding node of the root of $F_i$ to the corresponding node of the root of $F_j$.
If $G\left(\mathcal{F}(\mathcal{T})\right)$ is acyclic,  the agreement forest $\mathcal{F}$ is then said to be acyclic. An agreement forest $\mathcal{F}$ is said to the maximum acyclic agreement forest (MAAF) if $\mathcal{F}$ is an acyclic agreement forest with the smallest number of components for the given $\mathcal{T}$. 

One subtle detail about MAF and MAAF is that each tree in $\mathcal{T}$ is assumed to have an outgroup taxon. This outgroup is meant to properly root the trees when we connect the components in an agreement forest. Throughout the paper, we assume such an outgroup is present in $\mathcal{T}$. 
See Fig. \ref{figMAAF} for an example of agreement forests, which is taken from \cite{WUWANG2010}.

\begin{figure}[!htb]
    \centering
    \includegraphics[scale=0.7]{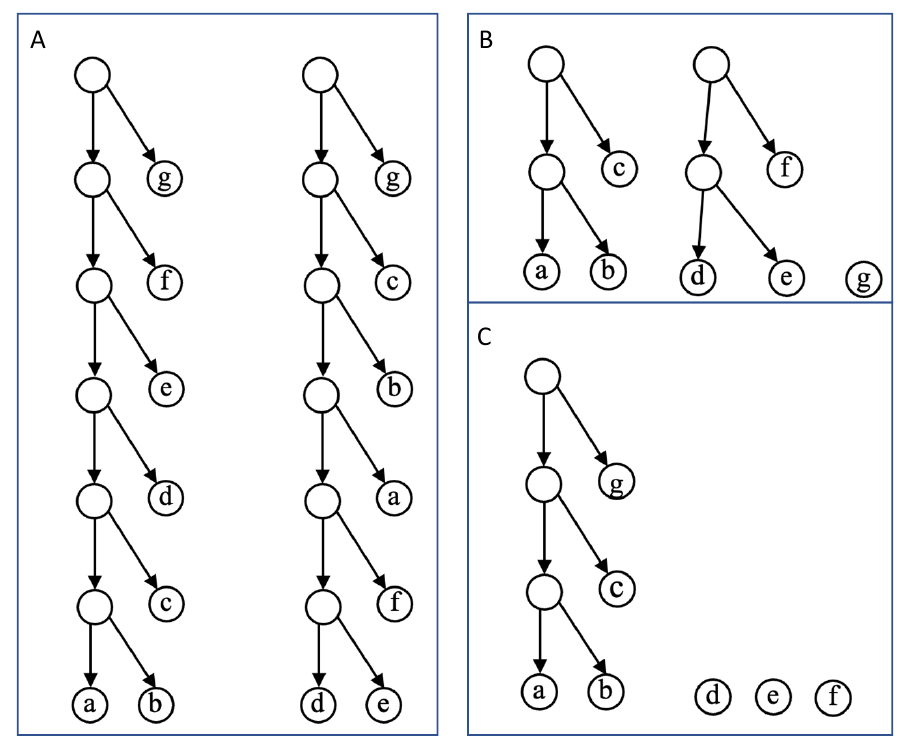}
  \caption{\label{figMAAF} Illustration of the maximum acyclic agreement forest (MAAF) of trees. (A) Two trees.
  $g$ is the added outgroup.  
  (B) A maximum agreement forest (MAF) of the two trees. This MAF is not acyclic, because the first component (with taxa $a,b$ and $c$) is below the second component (with taxa $d,e$ and $f$) in the left tree, but is above the second component in the right tree in Part A, 
  forming a cycle between these two components in the component graph of the agreement forest.
  (C) An MAAF of the two trees.}
\end{figure}


Regarding to the reticulation number of two trees, the following theorem establishes a sharp upper bound.

\begin{theorem}
\label{theorem2Tree}
(\cite{BGMS05}) The reticulation number $R(T,T')$ of two trees $T$ and $T'$ is equal to the size of the MAAF of $T$ and $T'$ minus one, which is at most $ n-2$ if there are $n$ taxa in $T$ and $T'$.  Moreover, the reticulation number for two conjugate caterpillar trees with $n$ taxa is $n-2$. 
\end{theorem}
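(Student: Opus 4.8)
The plan is to prove Theorem~\ref{theorem2Tree} in two parts. First, I would establish the structural identity $R(T,T')=|\mathcal{F}_{\mathrm{MAAF}}(T,T')|-1$, together with the bound $|\mathcal{F}_{\mathrm{MAAF}}|\le n-1$; second, I would exhibit the conjugate caterpillar pair as a matching lower-bound example.

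For the identity, the key is to move in both directions between acyclic agreement forests and networks. Given an acyclic agreement forest $\mathcal{F}=\{F_1,\dots,F_k\}$ for $\{T,T'\}$, I would build a network $N$ displaying both trees with $R(N)=k-1$: take a new root, attach the $k$ components below it, and then re-insert the cut edges of $T$ and of $T'$ as reticulate edges; acyclicity of $G(\mathcal{F})$ is exactly what guarantees these re-insertions can be ordered so that the resulting graph is a DAG, and a counting argument (each of the $k-1$ "non-outgroup" components contributes one reticulation) gives $R(N)=k-1$. Conversely, given any network $N$ displaying $T$ and $T'$ with $R(N)=r$, I would delete one reticulate edge at each reticulation node in two different ways to recover (embeddings of) $T$ and $T'$, and show that the maximal subtrees on which the two embeddings agree form an acyclic agreement forest with at most $r+1$ components. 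Taking $\mathcal{F}$ to be an MAAF then yields $R(T,T')=|\mathcal{F}_{\mathrm{MAAF}}|-1$. The bound $|\mathcal{F}_{\mathrm{MAAF}}|\le n-1$ follows because the totally disconnected forest of $n$ single-taxon components is trivially an acyclic agreement forest (its component graph has no edges, since with an outgroup no component can be "above" another in both trees in a way that cycles)—actually one gets $n$ there, but the presence of the outgroup lets us always merge the outgroup with one neighbor, giving a valid acyclic forest with $n-1$ components, hence $R\le n-2$.

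For the tightness claim, I would take the conjugate caterpillars $C(\ell_1\cdots\ell_{n-2}\{\ell_{n-1}\ell_n\})$ and $C(\ell_n\cdots\ell_3\{\ell_2\ell_1\})$ and argue that every acyclic agreement forest has at least $n-1$ components, equivalently that any agreement forest with at most $n-2$ components is necessarily cyclic. The idea is that in the first caterpillar the root-to-leaf order is $\ell_1,\ell_2,\dots$, while in the second it is the reverse; so if an agreement-forest component contains two taxa $\ell_i,\ell_j$ with $i<j$, then in the path connecting the components of $T$ the subtree containing $\ell_i$ sits above that containing $\ell_j$, while in $T'$ it sits below—and one shows that any forest with $\le n-2$ components (hence with at least one component holding two taxa) forces such an inversion between two distinct components, creating a 2-cycle in $G(\mathcal{F})$. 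Combined with the upper bound this gives $R=n-2$ exactly.

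I expect the main obstacle to be the converse direction of the identity: extracting an \emph{acyclic} agreement forest from an arbitrary displaying network and controlling its size by $R(N)+1$. One has to be careful that deleting reticulate edges to recover $T$ and then (differently) to recover $T'$ yields two embeddings whose "common refinement" is genuinely a forest obtainable from each tree by edge deletions and suppressions, and that the induced order on components is acyclic—this is where the DAG structure of $N$ is used, and where the bookkeeping with the outgroup and with reticulations having in-degree possibly larger than $2$ requires the most care. The caterpillar lower bound, by contrast, is essentially a clean combinatorial inversion argument once the correspondence between components and paths in the two trees is set up.
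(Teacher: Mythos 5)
The paper does not prove this theorem; it is stated with a citation to \cite{BGMS05} and used as a black box, so the only meaningful comparison is with that source. Your outline follows its standard route: a two-way correspondence between acyclic agreement forests of size $k$ and displaying networks with $k-1$ reticulations, an easy acyclic forest of size $n-1$ for the upper bound, and the conjugate caterpillars as the extremal example. Two points in the sketch need repair. For the $n-2$ upper bound, the detour through the all-singleton forest and ``merging the outgroup with one neighbor'' is both roundabout and imprecise; the clean statement is that \emph{any} two taxa span a common rooted $2$-leaf subtree of $T$ and $T'$, so the forest consisting of that one cherry and $n-2$ singletons is an agreement forest of size $n-1$, acyclic because singleton components (being leaves) emit no edges in $G(\mathcal{F})$. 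This is exactly the $k=2$ instance of the paper's Proposition 3.1.

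The more substantive gap is in the caterpillar lower bound. You write that $\le n-2$ components ``hence at least one component holding two taxa'' forces an inversion; one $2$-taxon component forces nothing, since a single cherry plus $n-2$ singletons is acyclic and already has $n-1$ components. To drop below $n-1$ components you need either (a) a component on $\ge 3$ taxa, or (b) two components each on $\ge 2$ taxa. Case (a) must be excluded separately: a common rooted subtree of the two conjugate caterpillars on three or more leaves would have to realize both the induced caterpillar order and its reversal, which is impossible. In case (b), the inversion is not automatic either: two components $\{\ell_i,\ell_j\}$ ($i<j$) and $\{\ell_k,\ell_l\}$ ($k<l$) with nested or crossing index intervals cannot coexist at all, because their spanning paths in the first caterpillar share edges, violating the requirement that the forest arise from each tree by edge deletions; only when $[i,j]$ and $[k,l]$ are disjoint can both components appear, and it is precisely then that the two caterpillars order them oppositely, giving the $2$-cycle in $G(\mathcal{F})$. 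With these two cases supplied, every acyclic agreement forest has at least $n-1$ components and $R=n-2$ follows. The converse direction of the main identity (extracting an acyclic agreement forest of size at most $R(N)+1$ from an arbitrary displaying network) remains the genuinely technical step, as you acknowledge, and your plan simply defers it to the argument of \cite{BGMS05}.
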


In the rest of this paper, to simplify notation, we often drop $\mathcal{T}$ and refer to the agreement forest as $\mathcal{F}$, with the understanding that it is always for some given  $\mathcal{T}$.

\subsection{Maximum agreement subtree}

A concept related to the maximum agreement forest (MAF) for a given set of trees $\mathcal{T}$ is the maximum agreement subtree (MAST). Instead of finding a set of components that ``partition'' each tree in $\mathcal{T}$, the MAST problem  aims to find a single component ``contained'' in each $T\in \mathcal{T}$ that is as large as possible.
Specifically, an agreement subtree $T'$ is a spanning subtree over a subset of taxa that appears in each tree $T\in \mathcal{T}$. Notably, each component in an agreement forest is an agreement subtree. 
The size of an agreement subtree is defined to be the number of taxa in it. A MAST has the largest number of taxa.


An interesting question about MAST is: how large can the MAST be for a set of trees with $n$ taxa? For rooted trees, it is easy to see that the MAST for two conjugate caterpillar trees contains only two taxa, making the MAST problem trivial. Therefore, existing research on MAST focuses on unrooted trees. 

\begin{lemma}
\label{lemma2MAT}
(\cite{MARKIN2020612}) The size of the MAST for two unrooted trees with $n$ taxa is $\Theta(  \log(n)  )$.  More precisely, the size of the MAST must be at least $\frac{1}{48}\log(n)$ for large $n$.
\end{lemma}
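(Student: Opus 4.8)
The plan is to treat the two inequalities separately: an upper bound exhibiting a family of pairs of unrooted binary trees on $n$ taxa whose MAST has size $O(\log n)$, and the matching lower bound that \emph{every} pair on $n$ taxa has a MAST of size $\Omega(\log n)$, with the constant tracked so as to be at least $\frac{1}{48}$ for large $n$. Note that two taxa (indeed any three, since all induce the same star) always form an agreement subtree, so the entire content is about how fast an agreement subtree can be made to grow relative to $n$.

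For the upper bound I would give an explicit recursive construction. Start from a fixed pair $(A_1,B_1)$ of unrooted binary trees on a constant-size taxon set with small MAST, and build $(A_{k+1},B_{k+1})$ on roughly $2|A_k|$ taxa by taking two disjoint relabelled copies of $(A_k,B_k)$ and joining them by a new edge, but joining the copies \emph{differently} in the $A$-tree than in the $B$-tree (for instance by a fixed permutation of which copy attaches where, or by attaching at ``rotated'' positions). One then argues by induction on $k$ that any set of taxa inducing the same subtree in both $A_{k+1}$ and $B_{k+1}$ is, apart from a bounded number of taxa near the joining edge, confined to a single copy, so its size exceeds $\mathrm{MAST}(A_k,B_k)$ by only an additive constant. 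This yields $\mathrm{MAST}(A_k,B_k)=O(k)=O(\log n)$ with $n=|A_k|$, giving tightness of the $\Theta$.

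For the lower bound the engine is a reduction that, from a pair $(T_1,T_2)$ on $n$ taxa, produces a pair on many fewer taxa whose MAST is not much smaller, to be iterated $\Theta(\log n)$ times. Fix a current common agreement subtree $S$ on taxon set $Y$, so $T_1|_Y=T_2|_Y=S$; then $S$ has only $2|Y|-3$ edges, and each taxon $z\notin Y$ has a well-defined ``attachment position'' in $T_1$ (the unique edge of $S$ that is subdivided in $T_1|_{Y\cup\{z\}}$) and likewise in $T_2$. Pigeonholing over the $(2|Y|-3)^2$ position pairs produces a large set $Z$ of taxa sharing one position pair; since such taxa are localized in both trees, one argues that a common agreement subtree of $T_1|_Z$ and $T_2|_Z$ can be re-attached to $S$ to yield a common agreement subtree of $T_1,T_2$ that properly extends a suitable part of $S$. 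Recursing on $(T_1|_Z,T_2|_Z)$ and summing the gains gives the bound, and the numeric constant $\tfrac{1}{48}$ is read off from the losses in the pigeonhole step together with the size of the base case.

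The main obstacle is precisely the re-attachment/gluing argument in the lower-bound step. Sharing an attachment position only guarantees that the taxa of $Z$ sit ``along the same edge'' of $S$ in each tree, not that their detailed arrangement along that edge agrees between $T_1$ and $T_2$, so a recursively found common subtree of $(T_1|_Z,T_2|_Z)$ need not naively combine with $S$. Moreover, to reach $\Omega(\log n)$ rather than the weaker $\Omega(\log n/\log\log n)$ that a crude version yields, the step must lose only a controlled factor in the number of taxa per round; this requires choosing the scaffold $S$, the re-attachment target, and the sub-collection $Z$ carefully — for example by further refining $Z$ so that its taxa hang from a single point of the shared edge in each tree, or by charging the per-round gain against $\log|Y|$ instead of a constant. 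This delicate bookkeeping is where the $\Theta$ and the explicit constant are ultimately pinned down.
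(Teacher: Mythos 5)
First, be aware that the paper offers no proof of this lemma: it is quoted directly from \cite{MARKIN2020612}, so there is no in-paper argument to measure your reconstruction against, and it has to be judged on its own. As written it has genuine gaps on both sides of the $\Theta$. For the upper bound, the inductive claim that any agreement set of $(A_{k+1},B_{k+1})$ is confined to one copy up to $O(1)$ taxa is exactly the hard part and is not justified: if $Y_L$ and $Y_R$ are agreement sets of the left and right copies, then $Y_L\cup Y_R$ fails to be an agreement set of the joined pair only if the projected attachment points of the joining edge into $A_{k+1}|_{Y_L}$ and into $A_{k+1}|_{Y_R}$ disagree with those in $B_{k+1}$, and you would need this failure for \emph{every} pair of large agreement sets simultaneously; without that, all the construction gives is $\mathrm{MAST}(A_{k+1},B_{k+1})\le 2\,\mathrm{MAST}(A_k,B_k)$, i.e.\ potentially $\Theta(n)$ rather than $O(\log n)$. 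The standard extremal pair avoids this issue entirely: a caterpillar against a fully balanced tree has MAST $O(\log n)$ for \emph{any} labelling, because an agreement subtree with a caterpillar must itself be a caterpillar, and a caterpillar restriction of a depth-$d$ complete binary tree has only $O(d)$ leaves (its spine projects to a path of length $O(d)$, with one pendant leaf per spine vertex).

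For the lower bound, the recursion you describe --- pigeonhole the outside taxa into $(2|Y|-3)^2$ attachment-pair classes and recurse on the largest class --- loses a factor of order $|Y|^2$ in the number of taxa for each $O(1)$ leaves gained, so reaching an agreement subtree of size $k$ only forces $n\le k^{O(k)}$, which proves $k=\Omega(\log n/\log\log n)$ and no more. You acknowledge this, but the missing ingredient is not ``delicate bookkeeping'': arranging for each round to cost only a constant factor in $n$ (or to gain $\Omega(\log |Y|)$ leaves per round), together with a correct gluing argument showing that a recursively found agreement subtree of $(T_1|_Z,T_2|_Z)$ can be attached to the scaffold, is precisely the content of Markin's improvement over the earlier $\Omega(\sqrt{\log n})$ bound, and your sketch does not supply it; the constant $\frac{1}{48}$ is likewise asserted rather than derived. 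In short, the proposal correctly parses the statement (an explicit bad pair plus a universal lower bound) but provides neither decisive step.
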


\subsection{Three or more trees}

Compared to the two-tree case, little is known about the range of the reticulation number of three or more trees. Let $m = |\mathcal{T}|$ and let the trees have $n$ taxa. There is a trivial bound: $R(\mathcal{T}) \le (m-1)(n-2)$ 
\cite{THREETREES23}. We also have the following bound that depends on the reticulation number for a pair of trees in $\mathcal{T}$.
%

\begin{proposition}
\label{propTrivBound}
(\cite{TCLBJCB23})
$\mathcal{T}$ be a set of trees on $n$ taxa and 
$m = \vert \mathcal{T}\vert $. 
If there exists a pair of trees for which the reticulation number is $d$, then, 

\[  R(\mathcal{T}) \le  (m-2)(n-2) + d   \] 
In addition, we can even compute a tree-child network with at most $(m-2)(n-2) + d$ reticulation nodes that displays the trees. 
\end{proposition}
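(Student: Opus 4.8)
The plan is to pin down one pair of trees achieving the bound $d$ and then assemble a network for all of $\mathcal{T}$ by repeatedly \emph{gluing} pairwise networks along their shared tree. Relabel so that $R(T_1,T_2)=d$. The heart of the argument is a gluing lemma: \emph{if a network $N$ displays a set $\mathcal{S}$ of trees, a network $N'$ displays a set $\mathcal{S}'$ of trees, and some tree $T$ belongs to both $\mathcal{S}$ and $\mathcal{S}'$, then there is a network $M$ displaying $\mathcal{S}\cup\mathcal{S}'$ with $R(M)\le R(N)+R(N')$; moreover if $N$ and $N'$ are tree-child then $M$ may be chosen tree-child.} Granting this, Proposition~\ref{propTrivBound} follows at once: by Theorem~\ref{theorem2Tree} there is a network $N_{12}$ displaying $\{T_1,T_2\}$ with $R(N_{12})=d$, and for each $i\in\{3,\dots,m\}$ a network $N_{1i}$ displaying $\{T_1,T_i\}$ with $R(N_{1i})=R(T_1,T_i)\le n-2$. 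Gluing $N_{12}$ with $N_{13}$ along $T_1$, then the result with $N_{14}$ along $T_1$, and so on through $N_{1m}$, produces after $m-2$ gluings a network displaying every tree of $\mathcal{T}$ whose reticulation number is at most $d+(m-2)(n-2)$. For the tree-child strengthening one only needs each $N_{1i}$ (and $N_{12}$) to be chosen tree-child, together with the tree-child clause of the lemma.

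To prove the gluing lemma I would argue with the subdivisions that witness ``display''. Since $N$ displays $T$ there is a subgraph $D\subseteq N$ that is a subdivision of $T$, and likewise a subdivision $D'\subseteq N'$ of $T$; using the outgroup convention we may assume the roots of $D$ and $D'$ coincide with the roots of $N$ and $N'$ respectively. Inserting finitely many degree-two vertices on edges of $N$ and of $N'$ — an operation that affects neither $R$, nor the tree-child property, nor which trees are displayed — we may assume $D$ and $D'$ are literally the same subdivision $\widehat{D}$ of $T$; note $\widehat{D}$ contains all $n$ leaves. Now let $M$ be obtained from the disjoint union of $N$ and $N'$ by identifying $\widehat{D}\subseteq N$ with $\widehat{D}\subseteq N'$ vertex-for-vertex and edge-for-edge. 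Then $M$ is again a phylogenetic network on the same taxa: it is acyclic because the two copies are glued along a common directed subtree whose orientation is respected by both, and every leaf is a single identified vertex. It displays $\mathcal{S}\cup\mathcal{S}'$ because a choice of incoming edges in $M$ that restricts to the one used in $N$ (respectively in $N'$) recovers every tree displayed by $N$ (respectively $N'$).

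The step I expect to be the real obstacle is the reticulation count. A vertex off the backbone $\widehat{D}$ survives in exactly one of $N$ and $N'$ and keeps its old contribution $i(v)-1$. A non-root backbone vertex $v$ has, in $M$, exactly one incoming edge lying on $\widehat{D}$ together with its $i_N(v)-1$ off-backbone incoming edges from $N$ and its $i_{N'}(v)-1$ off-backbone incoming edges from $N'$, so $i_M(v)-1=(i_N(v)-1)+(i_{N'}(v)-1)$; summing over all vertices gives $R(M)=R(N)+R(N')$. The fiddliest point is the tree-child clause: the only vertices whose set of children changes under the gluing are the backbone vertices, and for such a $v$ one must exhibit a child of $v$ in $M$ that is a tree vertex or a leaf. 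A short case analysis — distinguishing whether the tree-or-leaf child guaranteed in $N$ (or in $N'$) is the backbone successor of $v$ or not, and in the bad overlap case inserting one extra degree-two vertex on the backbone edge leaving $v$ — shows tree-childness is preserved; no new idea beyond this local surgery is needed. Finally, the bound $R(\mathcal{T})\le(m-2)(n-2)+d$ is just the accumulated cost of the $m-2$ gluings, and the network produced is tree-child by construction.
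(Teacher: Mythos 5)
Your reduction of the proposition to the gluing lemma is sound, and your in-degree bookkeeping for the glued graph is correct, but the lemma's proof breaks at the step you dispose of in a single clause: acyclicity. Identifying two networks along a common subdivision of a displayed tree does \emph{not} automatically yield a DAG, because directed cycles can be created through the off-backbone edges. Concretely, let $T=((a,b),(c,d))$; let $N$ be $T$ with a vertex $u$ subdividing the edge to $a$, a vertex $v$ subdividing the edge to $c$, and an added reticulation edge $u\to v$, so that $R(N)=1$ and $N$ displays $T$ and $(((a,c),b),d)$; and let $N'$ be built the same way with subdivision vertices $u'$ (on the edge to $a$) and $v'$ (on the edge to $c$) but with the added edge $v'\to u'$, so that $R(N')=1$ and $N'$ displays $T$ and $(((a,c),d),b)$. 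The two embedded subdivisions of $T$ are isomorphic, and identifying them vertex-for-vertex as you prescribe ($u$ with $u'$, $v$ with $v'$) produces both $u\to v$ and $v\to u$, a directed $2$-cycle. So the assertion that $M$ ``is acyclic because the two copies are glued along a common directed subtree whose orientation is respected by both'' is false: the backbone's orientation is respected, but reachability in $M$ is generated by the union of two partial orders on the backbone vertices that need not be compatible.

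What the lemma actually requires is a careful choice of the \emph{interleaving} of the two sets of subdivision points on each edge of $T$ (in the example, placing $u$ strictly above $u'$ and $v'$ strictly above $v$ kills the cycle), together with a proof that a cycle-free interleaving always exists and continues to exist as you iterate the gluing over $i=3,\dots,m$. That is the real content of the argument --- not the reticulation count or the tree-child check, which are the parts you flag as the obstacles. It is also precisely the difficulty that the cited proof sidesteps: Proposition~\ref{propLTS} builds the network from common super-sequences of the lineage taxon strings along one fixed linear order of the taxa, so the output is a tree-child DAG by construction and no post hoc acyclicity argument is needed. To salvage your route you would need to prove the interleaving claim (e.g., by merging topological orders of $N$ and $N'$ that both extend the order induced by $\widehat{D}$, and checking that the merge restricts consistently to every backbone path); as written, the proof has a genuine gap at its central step.
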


Note that, in the worst case, $d=n-2$ by Theorem~\ref{theorem2Tree}. Thus, Proposition~\ref{propTrivBound} does not seem to lead to better bounds than the 
trivial $(m-1)(n-2)$ bound. However, we will show later that Proposition 2.3 can indeed lead to improved bounds on reticulation numbers with additional facts.


\begin{figure}[!t]
\centering
\includegraphics[scale=0.7]{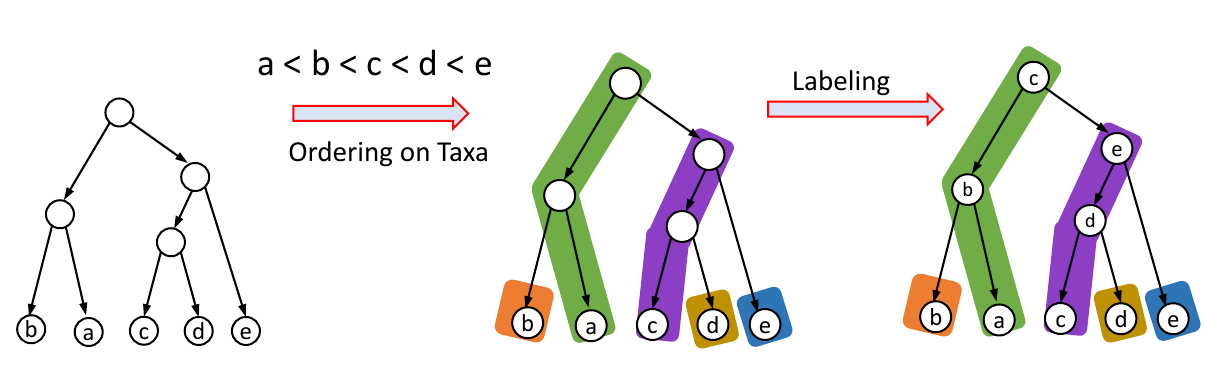}
\caption{\label{fig30}
Illustration of decomposing a tree $T$ into $n$ disjoint paths and the construction of the lineage taxon string for each taxon. Here, taxa are ordered alphabetically. The obtained strings for $a$ to $e$ are respectively $cba$, 
$b$, $edc$, $d$ and $e$.}
\end{figure}

\subsection{Tree-child networks}
\label{sectTreeChild}


Let $T$ be a binary tree on a $n$-taxon set $\mathcal{X}=\{x_1, \ldots, x_n\}$. We now consider an ordering $\prec$ on $\mathcal{X}$ by assuming:
$$x_1\prec x_2 \prec \cdots \prec x_n.$$ 
Using this ordering, we can decompose $T$ into $n$ disjoint paths $P_1, P_2, \cdots, P_n$ through a recursive approach (Fig.~\ref{fig30}): 
\begin{itemize}
    \item $P_1$ is the unique path from the root to the leaf $x_1$. 
   \item For $i>1$, $P_i$ consists of the leaf $x_i$ and  its ancestors that are not in $\cup_{1\leq j\leq i-1}P_j$. 
\end{itemize}
It is clear that each path is non-empty and $P_{n}$ just contains the leaf $x_{n}$.

We further label an internal node $u$ with a taxon $x_j$ if $u$ is the parent of the first node in the path $P_j$ whose last node is $x_i$ (Fig.~\ref{fig30}). This process yields a labeling string from each $P_i$, called the lineage taxon string (LTS) of $x_i$.  Note that the LTS of $x_i$ is a string on 
$\{x_i, x_{i+1}, \cdots, x_{n}\}$.

Suppose we are given $m$ trees $T_j$ ($1\leq j\leq m$) on $\mathcal{X}$ and the ordering $\prec$. 
For each $x_i$, we obtain an LTS $S_{ij}$ in every $T_j$.  A sequence on $\{x_i, x_{i+1}, \cdots, x_{n}\}$ is called a common super-sequence $S$  of these $S_{ij}$ if each $S_{ij}$ can be obtained from $S$ by removing 0 or more symbols.
The following theorem establishes the connection between the lengths of $S_{ij}$ and the reticulation number of a tree-child network that displays these trees $T_j$.

\begin{proposition}
\label{propLTS}
 (\cite{LXZ23})
 Let $\mathcal{T}$ be a set of $m$ trees on $\mathcal{X}$ and let $\prec$ be an ordering on $\mathcal{X}$.
  Suppose, for each $i$, $S_i$ is a common super-sequence of the LTSs $S_{i1}, S_{i2},\cdots, S_{im}$ that are obtained for $x_i$ in the $m$ trees,  and $x_i$ appears only as the last character in $S_i$ and $x_{j}$ does not appear in $S_i$ for each $j$ such that $x_j\prec x_i$. A tree-child network $N$ can be efficiently reconstructed from $S_1, S_2, \cdots, S_n$ such that $N$ displays each $T_i$ and its reticulation number is 
  $\sum_{1\leq i\leq n}(|S_i|-1) - (n-1).$
\end{proposition}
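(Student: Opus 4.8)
The plan is to construct the network $N$ directly from $S_1,\dots,S_n$ and then verify, in turn, that $N$ is a tree-child phylogenetic network, that it displays each $T_j$, and that its reticulation number equals $\sum_{i=1}^{n}(|S_i|-1)-(n-1)$. For the construction, write $\ell_i=|S_i|$ and $S_i=a_{i,1}a_{i,2}\cdots a_{i,\ell_i}$, and for each taxon $x_i$ lay down a directed path $\pi_i$ with nodes $v_{i,1}\to v_{i,2}\to\cdots\to v_{i,\ell_i}$ in which $v_{i,\ell_i}$ is the leaf labelled $x_i$. The two hypotheses on $S_i$ ensure that for every $k<\ell_i$ the ``branch character'' $a_{i,k}$ is some $x_j$ with $j>i$. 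For each $j\ge 2$ add one further node $h_j$, draw the arc $h_j\to v_{j,1}$, and for every branch occurrence $a_{i,k}=x_j$ draw the arc $v_{i,k}\to h_j$, so that $v_{i,k}$ acquires the two children $v_{i,k+1}$ and $h_j$. Take $v_{1,1}$ as the root and suppress any $h_j$ of in-degree one. This is clearly computable in time polynomial in $\sum_i|S_i|$.

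Next I would check that $N$ is a tree-child phylogenetic network. Acyclicity follows from a smallest-index argument: in a putative directed cycle, let $i^{\star}$ be the least index of a path it meets; the only arcs entering the set $\{h_{i^{\star}}\}\cup\{v_{i^{\star},k}\}$ from outside it are those into $h_{i^{\star}}$, and these originate on strictly lower-indexed paths, a contradiction. A short degree check then shows each $v_{i,k}$ with $k<\ell_i$ is a tree node, each $v_{i,\ell_i}$ a leaf, each surviving $h_j$ a reticulation, and $v_{1,1}$ the unique node of in-degree $0$; that $h_j$ has in-degree at least one for every $j\ge 2$ and that $N$ is connected both follow by induction on the path index, using the basic fact that in any tree the path ending at $x_j$ hangs off exactly one earlier path. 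Finally $N$ is tree-child because the $\pi_i$-child of every internal node, and the unique child $v_{j,1}$ of every $h_j$, is always a tree node or a leaf.

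For ``$N$ displays $T_j$'', fix $j$ and fix an embedding of each LTS $S_{ij}$ into $S_i$ as a subsequence. In $T_j$ the path ending at $x_k$ hangs off exactly one earlier path, so $x_k$ occurs as a branch character in exactly one of $S_{1j},\dots,S_{nj}$; resolve the reticulation $h_k$ by keeping precisely the in-arc matching that occurrence under the fixed embeddings and deleting the others. After the deletions, every source node of a deleted arc and every $h_k$ has become a degree-two node (in-degree and out-degree one), so the only cleanup needed is the suppression of degree-two nodes, and no unlabelled leaf is ever created. One then checks that the path decomposition of the reduced tree with respect to $\prec$ is exactly the family of shrunken paths $\pi_i$, and that the surviving branch labels along $\pi_i$ spell out $S_{ij}$; since a binary tree is determined by its LTSs for a fixed ordering --- this is the single-tree instance of the present construction, which reconstructs a tree from its LTSs with $R=0$ --- the reduced tree is $T_j$. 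I expect this step to be the main obstacle: one must verify that the choices made at the reticulations $h_k$, made independently for each $k$ and each target tree, are mutually consistent, i.e. that ``keep the arc surviving in $S_{ij}$'' deletes from $\pi_i$ exactly the branch characters that passing from $S_i$ to $S_{ij}$ deletes, so that the suppression step reproduces $T_j$ on the nose rather than a refinement or coarsening of it.

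It remains to count reticulations. Let $c_j$ be the number of occurrences of the symbol $x_j$ among all characters of $S_1,\dots,S_n$. Exactly one of them, the last character of $S_j$, is not a branch occurrence, so $h_j$ receives $c_j-1$ in-arcs and, being the only kind of node of in-degree above $1$, contributes $c_j-2$ to $R(N)$; every other node of $N$ contributes $0$, and $c_1=1$. Since $\sum_{j=1}^{n}c_j=\sum_{i=1}^{n}|S_i|$, we get $R(N)=\sum_{j=2}^{n}(c_j-2)=\bigl(\sum_{i=1}^{n}|S_i|-1\bigr)-2(n-1)=\sum_{i=1}^{n}(|S_i|-1)-(n-1)$, as claimed.
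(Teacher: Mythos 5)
The paper does not actually prove Proposition~\ref{propLTS}; it imports it from \cite{LXZ23}, and your reconstruction is precisely the construction used there and sketched in Section~\ref{sectTreeChild} and Algorithm~\ref{algoNet2}: one path per taxon realizing $S_i$, one reticulation $h_j$ fed by the branch occurrences of $x_j$, and resolution of each $h_j$ according to a fixed embedding of $S_{ij}$ into $S_i$. Your degree, tree-child, display, and counting arguments are all sound --- in particular $c_j\ge 2$ for every $j\ge 2$ (so each $h_j$ is genuinely attached), the consistency worry you flag is resolved exactly as you indicate because $x_j$ occurs as a branch character in exactly one LTS of each $T_l$, and the arithmetic $\sum_{j\ge 2}(c_j-2)=\sum_i(|S_i|-1)-(n-1)$ checks out --- so the proposal is a correct proof along the same lines as the cited source.
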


\section{Two useful facts}

We start with two simple but useful observations.

\subsection{A connection between the MAST and the reticulation number}

Intuitively, if two rooted trees have a large rooted agreement subtree, then their maximum acyclic agreement forest cannot contain too many components. Then, by Theorem \ref{theorem2Tree}, the reticulation number for the trees is smaller than  $n-2$. More precisely, we can establish the following result.

\begin{proposition}
\label{propMASMAAF}
Let $T_1$ and $T_2$ be two trees on $n$ taxa.
If there is a rooted agreement subtree of size $k$ for $T_1$ and $T_2$, then,   
\[  R(  T_1, T_2 )  \le n-k  \]

\end{proposition}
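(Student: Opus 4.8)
The plan is to reduce the statement to a counting bound on the size of the maximum acyclic agreement forest (MAAF) of $T_1$ and $T_2$, and then exhibit an explicit acyclic agreement forest of the right size. By Theorem~\ref{theorem2Tree}, $R(T_1,T_2)$ equals the number of components of the MAAF of $T_1$ and $T_2$ minus one, so it suffices to produce \emph{some} acyclic agreement forest of $\{T_1,T_2\}$ with at most $n-k+1$ components.

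Let $A$ be a rooted agreement subtree on a taxon set $X_A$ with $|X_A|=k$; by definition $A = T_1|_{X_A} = T_2|_{X_A}$, the restriction of each $T_i$ to $X_A$. Consider the forest
$$\mathcal{F} = \{A\} \cup \{\, \{x\} : x \in \mathcal{X}\setminus X_A \,\},$$
which has exactly $1+(n-k)$ components, and whose taxon sets clearly partition $\mathcal{X}$. The first step is to check that $\mathcal{F}$ is an agreement forest of $\{T_1,T_2\}$: in each $T_i$, delete the pendant edge sitting immediately above every leaf labelled by a taxon in $\mathcal{X}\setminus X_A$; the $n-k$ removed leaves become the singleton components, the remaining graph is still connected, and after recursively discarding unlabelled leaves and suppressing degree-two vertices what is left is precisely $T_i|_{X_A}=A$. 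Thus $\mathcal{F}$ is obtained from each $T_i$ by the operations in the definition of an agreement forest.

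The second step is to verify that $\mathcal{F}$ is \emph{acyclic}. In the component graph $G(\mathcal{F})$, the node of $T_i$ corresponding to the root of a singleton component $\{x\}$ is the leaf $x$ itself, and a leaf is never a proper ancestor of any other vertex; hence no singleton component has an outgoing edge in $G(\mathcal{F})$, and every arc of $G(\mathcal{F})$ emanates from the single component $A$. A directed graph whose arcs all leave one fixed vertex is acyclic, so $\mathcal{F}$ is an acyclic agreement forest. (The assumed outgroup taxon is either a leaf of $A$ or one of the singletons, so it does not interfere with this argument.) Therefore the MAAF of $T_1$ and $T_2$ has at most $n-k+1$ components, and Theorem~\ref{theorem2Tree} yields $R(T_1,T_2)\le (n-k+1)-1 = n-k$.

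I expect the only point needing genuine care to be the claim in the first step that, after deleting the pendant edges above the taxa outside $X_A$ and cleaning up, the leftover component is exactly $A$ and not some larger subtree; this is precisely where the defining property of an agreement subtree, that $T_1|_{X_A}$ and $T_2|_{X_A}$ are the \emph{same} tree $A$, is invoked. The acyclicity step looks delicate at first glance but collapses immediately once one observes that singleton components are sinks in the component graph. As a sanity check, for two conjugate caterpillars the rooted MAST has size $k=2$, giving $R(T_1,T_2)\le n-2$, consistent with the tight bound of Theorem~\ref{theorem2Tree}.
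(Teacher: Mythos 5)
Your proof is correct and follows essentially the same route as the paper: build the forest consisting of the agreement subtree plus $n-k$ singletons, observe that singleton components are sinks in the component graph so the forest is acyclic, and invoke Theorem~\ref{theorem2Tree}. The only cosmetic difference is that the paper explicitly places the outgroup inside the agreement subtree, whereas you note the acyclicity argument goes through either way.
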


\begin{proof}
Let $T_s$ be a rooted agreement \textit{subtree} of $T_1$ and $T_2$ such that it contains $k$ leaves. We construct a forest $\mathcal{F}$ for $T_1$ and $T_2$ which contains $n-k+1$ components:   $T_s$ and $n-k$ components, each with a single leaf out of the $n-k$ taxa that are not in $T_s$. These $n-k$ components are called the singleton components. Here, we assume the added outgroup is part of $T_s$: since the outgroup is present for each tree, this outgroup can clearly be present in any agreement subtree. 
 Since $T_s$ is an agreement subtree, it is obvious that $\mathcal{F}$ is an agreement forest: each of $T_1$ and $T_2$ can be constructed by connecting these $n-k+1$ components. Now we argue that $\mathcal{F}$ is also acyclic. To see this, note that in the graph $G(\mathcal{F})$, there is an edge from the root (the outgroup) of $T_s$ to each singleton component. And there are \emph{no} other edges in $G(\mathcal{F})$: a singleton component cannot be a source of an edge in this graph.  Thus, $G(\mathcal{F})$ is acyclic.
Therefore, the number of components in the maximum acyclic agreement forest is at most $n-k+1$. By Theorem \ref{theorem2Tree}, $R(T_1,T_2) \le (n-k+1)-1 = n-k$.
\end{proof}

Note that existing non-trivial results on MAST are all for unrooted trees, while MAAF assumes rooted trees. In the following section, we will show that Proposition \ref{propMASMAAF} can indeed lead to a slightly improved bound on reticulation number for three trees.

\subsection{More about the reticulation number for two trees}
\label{Sect3}

Let  $T_1$ and $T_2$ be two distinct trees on three taxa $a, b$ and $c$.  Without the loss of generality, we may assume that the leaf $a$ is a child of the root in $T_1$ and $b$ is the child of the root in $T_2$. Hence $T_1$ and $T_2$ are topologically different, as shown in Fig.~\ref{fig1} (left and middle). This implies that $\calR (T_1, T_2)=1$, as the network with one binary reticulation node in Fig.~\ref{fig1} (right) displays $T_1$ and $T_2$, 

Let $T_1$ and $T_2$ be two distinct trees on four taxa $\{a, b, c, d\}$. If the two trees do not share a subtree on three taxa, there are two 
possibilities:  they are either conjugate caterpillar trees (top, Fig.~\ref{fig2}) or  balanced trees (bottom, Fig.~\ref{fig2}). In either case, $\calR (T_1, T_2)=2$, as shown in Fig.~\ref{fig2}.

In general, for trees on five or more taxa, we have the following result.

\begin{figure}[!b]
\centering
\includegraphics[scale=0.6]{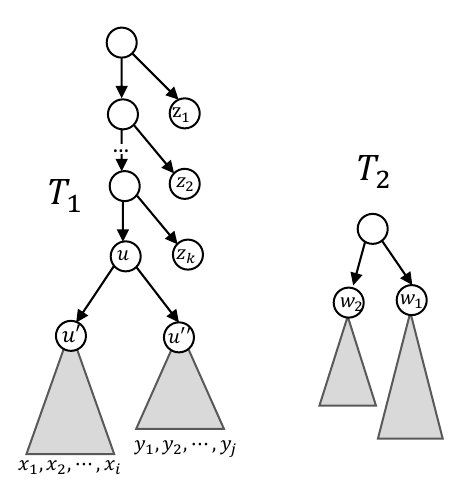}
\caption{\label{fig3}
Illustration of the topological structure of $T_1$ and $T_2$ in the proof of Proposition~\ref{prop_2tree_case}. Here, $T_1$ is a non-caterpillar tree.  
}
\end{figure}

\begin{proposition}
\label{prop_2tree_case}
    Let $T_1$ and $T_2$ be two binary trees on $n$ taxa, where $n\geq 5$.  If $T_1$ is not a caterpillar tree,  $R(T_1, T_2) \leq n-3$. 
\end{proposition}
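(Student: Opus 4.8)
The plan is to use Proposition~\ref{propMASMAAF}: it suffices to show that if $T_1$ is a binary tree on $n \geq 5$ taxa that is \emph{not} a caterpillar, then $T_1$ and any other binary tree $T_2$ on the same taxa share a rooted agreement subtree on at least $3$ taxa. Indeed, once we have a rooted agreement subtree of size $k \geq 3$, Proposition~\ref{propMASMAAF} gives $R(T_1, T_2) \leq n - k \leq n - 3$, which is exactly the claim. So the entire content of the argument is a combinatorial statement about a single non-caterpillar tree and an arbitrary comparison tree.

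\medskip

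First I would exploit the structure of a non-caterpillar binary tree. Because $T_1$ is not a caterpillar, it has an internal node $v$ whose two children are both internal nodes; equivalently, $T_1$ contains a ``cherry-free'' branching: there is a subtree rooted at $v$ whose two maximal proper subtrees each have at least two leaves. Fix such a $v$ and pick leaves $a_1, a_2$ from one side and $b_1, b_2$ from the other side; the quartet $\{a_1, a_2, b_1, b_2\}$ is displayed in $T_1$ with the rooted topology $((a_1,a_2),(b_1,b_2))$ (a ``balanced'' rooted quartet), as illustrated in Fig.~\ref{fig3}. Now I consider how $T_2$ restricts to these four taxa. If $T_2$ restricted to $\{a_1,a_2,b_1,b_2\}$ has the same rooted shape, we already have a rooted agreement subtree of size $4 \geq 3$ and are done. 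Otherwise $T_2|_{\{a_1,a_2,b_1,b_2\}}$ is a caterpillar on these four taxa, and one checks by a short case analysis that any rooted caterpillar on $\{a_1,a_2,b_1,b_2\}$ shares a rooted $3$-taxon subtree with the balanced quartet $((a_1,a_2),(b_1,b_2))$: the caterpillar has the form $C(x\,y\,\{z\,w\})$ meaning $(x,(y,(z,w)))$, and whatever $\{z,w\}$ is, the pair $\{z,w\}$ is either $\{a_1,a_2\}$, $\{b_1,b_2\}$, or a ``crossing'' pair; in the first two cases $\{z,w\}$ together with one leaf from the other side forms a common rooted triple, and in the crossing case one picks the triple consisting of the two leaves from whichever side is split plus the deeper of the crossing leaves — in every subcase a common rooted triple on $3$ of the four taxa survives. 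I would organize this as a finite check over the (few) rooted caterpillar shapes on four labeled leaves.

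\medskip

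The main obstacle I anticipate is making the last case analysis clean rather than a brute-force enumeration: naively there are several rooted caterpillar shapes on four labeled taxa, and one must verify a common rooted triple exists for each against the fixed balanced quartet. I expect the cleanest phrasing is the following: the balanced quartet $((a_1,a_2),(b_1,b_2))$ and \emph{any} rooted binary tree on $\{a_1,a_2,b_1,b_2\}$ share a rooted $3$-taxon subtree, because among the three pairs $\{a_1,a_2\}$, $\{a_1,b_1\}$, $\{a_2,b_2\}$ (say), at least one is a cherry or a ``deepest pair'' in $T_2|_{\{a_1,a_2,b_1,b_2\}}$ that is also consistent with the balanced quartet — restricting to that pair plus an appropriate third leaf gives the common triple. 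Once this quartet lemma is established, the proof is just: locate the balanced quartet in $T_1$, apply the quartet lemma to get a common rooted triple, invoke Proposition~\ref{propMASMAAF} with $k = 3$. I would double-check the hypothesis $n \geq 5$ is used only to ensure $n - 3 \geq 2$ is a meaningful (non-negative) bound and that there is ``room'' beyond the quartet; the structural extraction of the balanced quartet itself only needs $n \geq 4$ together with $T_1$ being non-caterpillar.

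\medskip

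One alternative, if the quartet lemma turns out awkward to state crisply, is to argue directly on MAAF size without going through Proposition~\ref{propMASMAAF}: build an acyclic agreement forest of $T_1, T_2$ with at most $n-2$ components by hand using the non-caterpillar branching at $v$ to ``save'' one component compared to the worst case, mirroring the singleton-component construction in the proof of Proposition~\ref{propMASMAAF} but keeping a size-$3$ (rather than size-$2$) non-singleton piece. I would only fall back to this if the rooted-triple argument does not package neatly; the Proposition~\ref{propMASMAAF} route is strictly cleaner when it works.
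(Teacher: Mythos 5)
There is a genuine gap, and it sits exactly at the step you flagged as the ``quartet lemma.'' Your case split ``either $T_2|_{\{a_1,a_2,b_1,b_2\}}$ has the same rooted shape, or it is a caterpillar'' omits a third possibility: $T_2$ restricted to the quartet can be a \emph{balanced} rooted quartet with a different pairing, e.g.\ $((a_1,b_1),(a_2,b_2))$. That tree is not a caterpillar and is not equal to $((a_1,a_2),(b_1,b_2))$, and the two share \emph{no} rooted $3$-taxon subtree: the rooted triples of $((a_1,a_2),(b_1,b_2))$ all have cherry $\{a_1,a_2\}$ or $\{b_1,b_2\}$, while those of $((a_1,b_1),(a_2,b_2))$ all have cherry $\{a_1,b_1\}$ or $\{a_2,b_2\}$. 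So the quartet lemma as you state it is false, and this is precisely the $n=4$ counterexample the paper records immediately after the proposition ($T_1=((a,b),(c,d))$, $T_2=((a,c),(b,d))$ have reticulation number $2>4-3$). Your closing remark that $n\ge 5$ is needed only to make $n-3$ ``meaningful'' and that the quartet extraction needs only $n\ge 4$ is the tell: if your argument were sound it would prove $R\le 1$ for those two balanced quartets, which is wrong. (Your caterpillar subcases, by contrast, do check out: every rooted caterpillar on the four taxa does share a triple with the balanced quartet.)

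The fix requires genuinely using a fifth taxon, which is what the paper's proof does. It takes the same starting point as you (a node $u$ of $T_1$ with both children internal, leaf sets $\mathcal{X}$ below one child, $\mathcal{Y}$ below the other, and $\mathcal{Z}$ the leaves outside $u$), but instead of fixing a single quartet it analyzes how the root bipartition $\mathcal{L}(w_1)\,|\,\mathcal{L}(w_2)$ of $T_2$ cuts $\mathcal{X}$, $\mathcal{Y}$, $\mathcal{Z}$. The problematic configuration --- $\mathcal{X}$ and $\mathcal{Y}$ each split one-and-one across the root of $T_2$ --- is resolved only because $n\ge 5$ forces either a third leaf in $\mathcal{X}$ or $\mathcal{Y}$ on one side, or a leaf $z\in\mathcal{Z}$, and in the latter case the common triple is built from $z$ together with the two split-off leaves $x''\in\mathcal{X}\cap\mathcal{L}(w_2)$ and $y''\in\mathcal{Y}\cap\mathcal{L}(w_2)$ (this works because in $T_1$ the triple $\{z,x'',y''\}$ resolves as $(z,(\ldots))$ only up to the fact that $x''$ and $y''$ meet below $u$, matching their position under $w_2$ in $T_2$). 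Your reduction to Proposition~\ref{propMASMAAF} and the extraction of the non-caterpillar branching are both exactly as in the paper; what is missing is the handling of the crossing-balanced restriction, and that cannot be done within the quartet alone.
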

\begin{proof}
 Let  $T_1$  be  a non-caterpillar tree on $n$ ($\geq 5$) taxa. Then, there exists an internal node $u$ with children $u'$ and $u''$ such that:
 \begin{itemize}
     \item Both $u'$ and $u''$ are internal nodes and so there are at least two leaves below each of $u'$ and $u''$;  and
     \item 
     Each ancestor of $u$ (if it exists) has exactly one leaf child, 
 \end{itemize}
 as shown in Fig.~\ref{fig3} (left).
Assume the subset of leaves below $u'$  is  $\mathcal{X}=\{x_1, x_2, \cdots, x_i\}$ ($i\geq 2$) and the set of leaves below $u''$ is  $\mathcal{Y}=\{y_1, y_2, \cdots, y_j\}$ ($j\geq 2$).  We also assume that the subset of leaves not below $u$ is $\mathcal{Z}=\{z_1, z_2, \cdots, z_k\}$. $T|_{x,y,z}$ denotes the spanning subtree of $T$ which connects the three taxa $x,y$ and $z$. 

Now, we consider $T_2$. Let
the children of the root be $w_1$ and $w_2$ in $T_2$ (right, Fig.~\ref{fig3}).
Using $\mathcal{L}(w_1)$ and $\mathcal{L}(w_2)$ to denote the subset of leaves below $w_1$ and $w_2$ respectively,  we consider the following three cases.

\noindent \textbf{Case 1}: $\mathcal{X}\cap \mathcal{L}(w_1)$ contains at least two leaves $x'$ and $x''$. 

If $\mathcal{L}(w_2)\cap (\mathcal{Y}\cup \mathcal{Z})$ contains at least one leaf $\ell$,
$T_1|_{\{x', x'', \ell\}}$ is a common subtree of $T_1$ and $T_2$. Otherwise,
$(\mathcal{Y}\cup \mathcal{Z}) \subseteq \mathcal{L}(w_1)$. 
In this subcase, 
for $\ell\in \mathcal{L}(w_2)$, 
$T_1|_{\{y_1, y_2, \ell\}}$ is a common subtree of $T_1$ and $T_2$.

\noindent \textbf{Case 2}: $\mathcal{X}\cap \mathcal{L}(w_2)$ contains at least two leaves $x'$ and $x''$. 

This case is symmetric to Case 1. Using a similar argument, we can show that $T_1$ and $T_2$ has a common subtree on 3 taxa.

\noindent \textbf{Case 3}: $\mathcal{X}\cap \mathcal{L}(w_1)$ contains only a leaf $x'$ and $X\cap \mathcal{L}(w_2)$ contains only a leaf $x''$.

If $\mathcal{L}(w_1)\cap \mathcal{Y}$ contains at least two leaves $y', y''$, then
$T_1|_{\{y', y'', x''\}}$ is a common subtree of $T_1$ and $T_2$.
If $\mathcal{L}(w_2)\cap \mathcal{Y}$ contains at least two leaves $y', y''$, then
$T_1|_{\{y', y'', x'\}}$ is a common subtree of $T_1$ and $T_2$. Otherwise, since $n\geq 5$, we conclude that
$\mathcal{L}(w_i) \cap \mathcal{Y}$ contains only one leaf for $i=1, 2$ and $\mathcal{Z}\neq \emptyset$.  Without loss of generality, we assume $\mathcal{L}(w_1)\cap \mathcal{Z}$ contains a leaf $z$. Then,  $z$ and the unique leaf of $\mathcal{X}\cap \mathcal{L}(w_2)$ and the unique leaf of $\mathcal{Y}\cap \mathcal{L}(w_2)$ form a common subtree of $T_1$ and $T_2$.

We have showed that $T_1$ and $T_2$ share a 3-leaf subtree. 
According to Proposition \ref{propMASMAAF}, this implies that $\calR(T_1, T_2)\leq n-3$. 
\end{proof}

Note that Proposition~\ref{prop_2tree_case} does not  hold for $n=4$. Consider $T_1=((a,b),(c,d))$ and $T_2 = ((a,c),(b,d))$ (bottom row, Fig.~\ref{fig2}). Neither of the two trees is a caterpillar tree, yet their reticulation number is $2$, which is greater than $4-3$.

\section{The reticulation number for three trees}
\label{sect4}

We are given three trees $\mathcal{T} = \{ T_1, T_2, T_3\}$, where each $T_i$ are over the same $n$ taxa. We want to give a non-trivial upper bound on the hybridization number $R(T_1, T_2, T_3)$ for the three trees. 
For three trees, the trivial upper bound on the reticulation number \cite{THREETREES23} is $2(n-2)$.
%
Note that there are three caterpillar trees on $n$ taxa for which the reticulation number is at least $3n/2$ \cite{THREETREES23}. In the remainder of this section, we aim to derive an upper bound between between $3n/2$ and $2(n-2)$ for three trees using the following approach:
\begin{quote}
 Proposition 2.3 provides an upper bound for a set of trees based on the reticulation number of a pair of given trees. It does not explicitly imply an improved bound on the reticulation number, as the reticulation number for any two conjugate caterpillar trees on $n$ taxa is $n-2$. However, by applying the result on the MAST for unrooted trees in Lemma 2.2, we can show that for any set of three rooted trees on $n$ taxa, there always exists a pair of trees whose reticulation number is $n-\Theta(\log\log n)$ for large $n$, leading to an improved bound on the reticulation number for the three tree case.
 \end{quote}


Recall that the MAST for two unrooted trees $T$ and $T'$ is a common spanning subtree of $T$ and $T'$ on a subset of taxa. The concept of MAST can be easily extended to three or more trees: the MAST of multiple trees is a spanning subtree on a fixed subset of taxa that is shared by all trees. 

\begin{lemma}
\label{lemma3MATUnroot}
Any three unrooted trees $T_1, T_2, T_3$ on the same $n$ taxa have an unrooted agreement subtree that has  at least $\Theta(    \log\log n    )$ taxa for large $n$. 
\end{lemma}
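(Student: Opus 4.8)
The natural plan is to iterate Lemma 2.2 (the two-tree MAST bound). Start with $T_1$ and $T_2$. By Lemma 2.2, they have a common unrooted agreement subtree $S_{12}$ on a taxon subset $A\subseteq \mathcal{X}$ with $|A| \ge \frac{1}{48}\log n$. Now I would like to compare something ``built from'' $S_{12}$ against $T_3$. The subtlety is that Lemma 2.2 applies to two \emph{full} trees on a common taxon set, not to an agreement subtree against a full tree; so I would restrict $T_3$ to the taxon set $A$, obtaining the induced (topological) restriction $T_3|_A$, a tree on $|A|$ taxa. Then $S_{12}$ and $T_3|_A$ are two unrooted binary trees on the same $|A|$-taxon set, and Lemma 2.2 gives a common agreement subtree of size at least $\frac{1}{48}\log|A| \ge \frac{1}{48}\log\!\left(\frac{1}{48}\log n\right) = \Theta(\log\log n)$ for large $n$. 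The key observation making this work is that an agreement subtree of $S_{12}$ is automatically an agreement subtree of both $T_1$ and $T_2$ (a subtree of a subtree is a subtree, and the topological-restriction operations compose), and an agreement subtree of $T_3|_A$ is an agreement subtree of $T_3$ itself; hence the final common subtree is a genuine agreement subtree of all three trees $T_1,T_2,T_3$.

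Concretely, the steps in order are: (i) apply Lemma 2.2 to $T_1,T_2$ to get $S_{12}$ on $A$ with $|A|\ge\frac{1}{48}\log n$; (ii) form $T_3|_A$, the restriction of $T_3$ to $A$, and observe it is a binary unrooted tree on $A$ (suppressing resulting degree-two vertices); (iii) apply Lemma 2.2 again to the pair $S_{12}$ and $T_3|_A$, both trees on the $|A|$-taxon set, to obtain a common agreement subtree $S$ on a set $B\subseteq A$ with $|B| \ge \frac{1}{48}\log|A|$; (iv) verify that $S$ is an agreement subtree of each of $T_1,T_2,T_3$ by composing the restriction maps, and compute $|B| = \Theta(\log\log n)$.

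The main obstacle, and the step that needs care, is step (iv) together with the transitivity claim implicit in (iii): I must be sure that ``$S$ is an agreement subtree of $S_{12}$'' combined with ``$S_{12}$ is an agreement subtree of $T_1$'' really yields ``$S$ is an agreement subtree of $T_1$,'' i.e. that the class of agreement subtrees is closed under taking further agreement subtrees. For unrooted binary trees this follows because the topological restriction of $T$ to a taxon subset $C$ is obtained by the same deletion/contraction process, and restricting to $C'\subseteq C$ and then to $C''\subseteq C'$ gives the same result as restricting directly to $C''$; I would state this compositionality as a small observation (it is standard but worth making explicit). A secondary point to handle is the ``for large $n$'' quantifier: Lemma 2.2's bound $\frac{1}{48}\log|A|$ is only claimed for large $|A|$, so I need $|A|=\Theta(\log n)$ to exceed the threshold, which holds once $n$ is large enough; I would absorb this into the $\Theta(\log\log n)$ statement rather than chase explicit constants. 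Everything else is routine, and no acyclicity condition is needed here since the lemma only concerns a single agreement subtree, not an agreement forest.
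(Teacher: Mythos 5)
Your proposal is correct and follows essentially the same route as the paper's proof: apply Lemma~\ref{lemma2MAT} to $T_1,T_2$ to get an agreement subtree on a taxon set of size $\Theta(\log n)$, restrict $T_3$ to that set, and apply Lemma~\ref{lemma2MAT} again. Your explicit treatment of the compositionality of restriction (step (iv)) is a point the paper glosses over with ``clearly,'' but the argument is the same.
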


\begin{proof}
Let $n$ be sufficiently large.
By Lemma \ref{lemma2MAT}, $T_1$ and $T_2$ have an unrooted agreement subtree $Y$ over a subset of taxa $S$ such that  $\vert S\vert =\Theta(  \log n  )$. 
Now, we let $T_3^S$ be  the spanning subtrees of $T_3$ over $S$. That is,  $T_3^S$ is obtained by removing 
from $T_3$ the taxa not in $S$, along with any edges that are not part of the shortest paths between the taxa in $S$.
Applying Lemma \ref{lemma2MAT}  to $Y$ and $T_3^S$, we obtain an agreement subtree $Z$ with size  $\Theta(  \log   \log n  )  $ for $Y$ and $T_3^S$, as $\vert S\vert =\Theta(  \log n  )$. 
Clearly, $Z$ is a common agreement subtree for $T_1, T_2$ and $T_3$.
\end{proof}


\begin{proposition}
\label{theoremMATPair}
For any three rooted binary phylogenetic trees $T_1, T_2$ and $T_3$ (over the same $n$ taxa),  there always exists a pair of trees that have a rooted agreement subtree of size $\Theta( \log\log n ) $.
\end{proposition}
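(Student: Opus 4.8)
The plan is to reduce Proposition~\ref{theoremMATPair} to a fact about three different rootings of a single unrooted tree. Forgetting the root of each $T_i$ gives unrooted phylogenetic trees $u(T_1),u(T_2),u(T_3)$ on the same $n$ taxa, so by Lemma~\ref{lemma3MATUnroot} they share an unrooted agreement subtree $Z$ on a taxon set $S$ with $k:=|S|=\Theta(\log\log n)$. The key observation is that the restriction $T_i|_S$ of the \emph{rooted} tree $T_i$ to $S$ has $u(T_i)|_S=Z$ as its underlying unrooted tree; hence $T_i|_S$ is obtained by rooting $Z$ at some point $r_i$ in the interior of an edge $e_i$ of $Z$. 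Since restriction is transitive, for any $S'\subseteq S$ we have $T_i|_{S'}=(Z\text{ rooted at }r_i)|_{S'}$, which is $Z|_{S'}$ rooted at the point to which $r_i$ is carried when one restricts to $S'$. So it suffices to produce a large $S'\subseteq S$ and indices $i\neq j$ such that $r_i$ and $r_j$ are carried to the same point of $Z|_{S'}$: the common tree $T_i|_{S'}=T_j|_{S'}$ is then a rooted agreement subtree of $T_i$ and $T_j$ of size $|S'|$.

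If two of $e_1,e_2,e_3$ are the same edge, the corresponding two trees $T_i|_S$ are already identical and we are done with $S'=S$. Otherwise $r_1,r_2,r_3$ lie on three distinct edges of $Z$; let $m$ be their median in $Z$ (a degree-$3$ vertex of $Z$, or one of the $r_i$ themselves when the three points are collinear). Delete the four points $\{r_1,r_2,r_3,m\}$ from $Z$. A short analysis of the shape of the minimal subtree spanning $\{r_1,r_2,r_3\}$ shows that this leaves at most six subtrees (components), and since none of $r_1,r_2,r_3,m$ is a leaf, every taxon of $S$ lies in exactly one of them; let $C$ be a component carrying the most taxa of $S$ and put $S':=S\cap C$, so $|S'|\ge k/6$.

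The heart of the argument is the claim that, for this $C$, two of $r_1,r_2,r_3$ are carried to the same point of $Z|_{S'}$. The same structural analysis shows that the boundary $\partial C$ (the deleted points adjacent to $C$) has at most two elements. For each $i$, all paths from $r_i$ into the subtree $\overline{C}=C\cup\partial C$ enter through a single point $b_i\in\partial C$, and the segment of $Z$ from $r_i$ to $b_i$ contains no taxon of $S'$; hence $r_i$ is carried to the image of $b_i$ in $Z|_{S'}$. Since $b_1,b_2,b_3$ all lie in the at-most-two-element set $\partial C$, two of them coincide, say $b_i=b_j$; therefore $T_i|_{S'}=T_j|_{S'}$ is a rooted agreement subtree of $T_i$ and $T_j$ with $|S'|\ge k/6=\Theta(\log\log n)$ taxa.

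The only delicate point is the structural claim invoked twice above: that deleting $\{r_1,r_2,r_3,m\}$ from the binary tree $Z$ yields at most six components, each with a boundary of at most two of these points, and that each $r_i$ has a single well-defined entry point $b_i$ into any such component. This is where I expect to need the most care, handling separately the ``star'' configuration (the three root points sit on three edges meeting at the median vertex $m$, giving three outer components and three ``arm'' components) and the ``collinear'' configuration (one root point lies on the path between the other two, giving two outer and two interior components); in both cases enumerating the components directly verifies the claim.
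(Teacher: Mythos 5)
Your proof is correct and follows essentially the same route as the paper's: both invoke Lemma~\ref{lemma3MATUnroot} to obtain an unrooted agreement subtree of size $\Theta(\log\log n)$ and then argue that two of the three induced rootings must agree on a constant fraction of its taxa. The only difference is bookkeeping --- the paper splits into the collinear and star configurations of the three root edges and directly exhibits the shared subtree (losing a factor of $2$ or $3$), whereas you delete all four distinguished points and pigeonhole on the at-most-two boundary points of the largest of the at most six components (losing a factor of $6$); either constant is immaterial for the $\Theta(\log\log n)$ claim.
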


\begin{proof}
Let $T^S$ be an unrooted agreement subtree of size $\Theta(    \log\log n    )$ for the unrooted versions of  the trees $T_1, T_2$  and $T_3$, obtained by using Lemma \ref{lemma3MATUnroot}. 
$T^S$ may induce different rooted subtrees in $T_1, T_2$ and $T_3$, as these subtrees may have their roots in different branches  of $T^S$.
We argue that, regardless of how the subtrees induced by $T^S$ are rooted in the given trees, there are always two induced subtrees that have a \emph{rooted} agreement subtree of size $\Theta(     \log\log n    )$.

\begin{figure}[!b]
    \centering
    \includegraphics[scale=1.0]{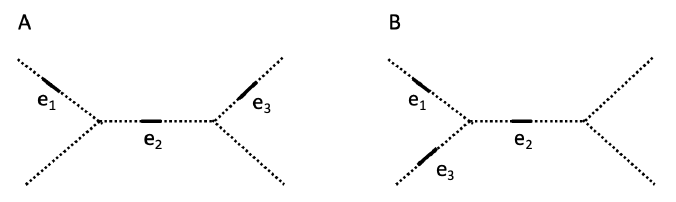}
  \caption{\label{figRootTree} Illustration of the positions of the root of the induced rooted subtrees in $T^S$. Dashed lines: the unrooted subtree $T_s$ of $T_1, T_2$ and $T_3$. $e_1, e_2$ and $e_3$: the edges used to root $T_s$ within $T_1, T_2$ and $T_3$ respectively. (A) $e_2$ is in the shortest path between 
  $e_1$ and $e_3$  in $T^S$.
  (B) None of $e_1, e_2$ or $e_3$ is found in the shortest path between the other two in $T^S$.
  }
\end{figure}

To see this, we assume the induced subtree is rooted on the edge $e_i\in E(T^S)$, $i=1, 2, 3$. 
There are two possible cases.

\noindent \textbf{Case 1}: Either $e_1, e_2$ or $e_3$ is found in the shortest path between the other two in $T^S$. Without loss of generality, assume $e_2$ lies between $e_1$ and $e_3$ (Fig.~\ref{figRootTree}A). Rooting at $e_2$ splits $T^S$ into two parts, with the larger part having a size of $\Theta(    \log\log n   )$. If the right part is larger, 
the subtree induced by rooting at $e_1$ and $e_2$  share the right part, implying the rooted subtrees induced by $T^S$ in $T_1$ and $T_2$ are of size $\Theta( \log\log n  )$. 
Symmetrically, if the left part is larger, we conclude that the induced rooted subtree in $T_2$ and $T_3$ are of size $\Theta( \log\log n  )$. 
\\

\noindent \textbf{Case 2}:  None of $e_1, e_2$ or $e_3$ is found in the shortest path between the other two in $T^S$ (Fig. \ref{figRootTree}B).
We let $v_r$ be the last common node in the shortest paths from $e_1$ to $e_2$ and from $e_1$ to $e_3$. 
Deleting $v_r$ splits $T^S$ into three disjoint parts: $X_1$, $X_2$ and $X_3$, where $X_i$ is the part containing $e_i$.  Clearly, the largest part among $X_1$, $X_2$ and $X_3$ has at least
$\Theta( \log\log n  )$ taxa.  Since  the rooted agreement subtree induced by $T^S$ in any two given trees $T_i$ and $T_j$ share the part $V_k$, where $i, j\in \{1, 2, 3\}, i \not= j$ and $k\in \{1, 2, 3\}\setminus \{i, j\}$,  we conclude 
there exists a pair of trees with a rooted agreement subtree of size $\Theta( \log\log n  )$
%
\end{proof}

We now can prove the following result.


\begin{theorem}
\label{theorem3treebound}
For three rooted binary phylogenetic trees $T_1, T_2$ and $T_3$ on the same $n$ taxa, $R(T_1, T_2, T_3) \le 2n-2 - \Theta(     \log\log(n)   )  $.
\end{theorem}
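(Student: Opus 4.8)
The plan is to combine Proposition~\ref{theoremMATPair} with Proposition~\ref{propMASMAAF} and Proposition~\ref{propTrivBound} in that order. First I would invoke Proposition~\ref{theoremMATPair}: among the three trees $T_1, T_2, T_3$ there is a pair, say $T_i$ and $T_j$, that admits a \emph{rooted} agreement subtree of size $k = \Theta(\log\log n)$. Next I would feed this pair into Proposition~\ref{propMASMAAF}, which converts a rooted agreement subtree of size $k$ into the bound $R(T_i, T_j) \le n - k = n - \Theta(\log\log n)$; call this quantity $d$. Finally I would apply Proposition~\ref{propTrivBound} with $m = 3$: since a pair of trees in $\mathcal{T}$ has reticulation number at most $d$, we get $R(\mathcal{T}) \le (m-2)(n-2) + d = (n-2) + d \le (n-2) + n - \Theta(\log\log n) = 2n - 2 - \Theta(\log\log n)$, which is exactly the claimed bound.

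A couple of small points need care in writing this up. One is that Proposition~\ref{propMASMAAF} and the whole MAAF machinery assume an outgroup taxon is present and that each tree is rooted; the agreement subtree supplied by Proposition~\ref{theoremMATPair} is a rooted agreement subtree, so this is consistent, but I would state explicitly that the outgroup can be taken to lie in the agreement subtree (this is already observed in the proof of Proposition~\ref{propMASMAAF}). The second point is purely about the asymptotic bookkeeping: $k = \Theta(\log\log n)$ means $k \ge c\log\log n$ for some constant $c>0$ and all large $n$, so $n - k \le n - c\log\log n$, and adding the deterministic $n-2$ term preserves the $\Theta(\log\log n)$ deficit. There is nothing subtle here, so I would keep this to one sentence.

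I do not expect any genuine obstacle in this proof — all the hard work has already been done in Lemma~\ref{lemma2MAT}, Lemma~\ref{lemma3MATUnroot}, and Propositions~\ref{propMASMAAF}, \ref{theoremMATPair}, and \ref{propTrivBound}. The only thing to be mildly careful about is making sure the quantifiers line up: Proposition~\ref{theoremMATPair} gives us \emph{one} good pair (we do not get to choose which pair), but Proposition~\ref{propTrivBound} only needs the \emph{existence} of a pair with small reticulation number, so the asymmetry is harmless. Thus the proof is essentially a three-line chain of citations, and I would present it as such.

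\begin{proof}
Let $n$ be large. By Proposition~\ref{theoremMATPair}, there exist $i, j \in \{1,2,3\}$ with $i \ne j$ such that $T_i$ and $T_j$ have a rooted agreement subtree of size $k = \Theta(\log\log n)$. As in the proof of Proposition~\ref{propMASMAAF}, we may assume the outgroup is contained in this agreement subtree, so Proposition~\ref{propMASMAAF} applies and yields
\[
R(T_i, T_j) \le n - k = n - \Theta(\log\log n).
\]
Now apply Proposition~\ref{propTrivBound} with $m = \vert \mathcal{T}\vert = 3$ and $d = R(T_i, T_j)$:
\[
R(T_1, T_2, T_3) \le (m-2)(n-2) + d = (n-2) + \bigl(n - \Theta(\log\log n)\bigr) = 2n - 2 - \Theta(\log\log n),
\]
as claimed.
\end{proof}
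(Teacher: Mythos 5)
Your proposal is correct and follows exactly the same three-step chain as the paper's own proof: Proposition~\ref{theoremMATPair} to find the good pair, Proposition~\ref{propMASMAAF} to bound its reticulation number by $n-\Theta(\log\log n)$, and Proposition~\ref{propTrivBound} with $m=3$ to conclude. The extra remarks about the outgroup and the quantifier alignment are sound but not needed beyond what the paper already records.
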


\begin{proof}
By Proposition \ref{theoremMATPair},  there are two trees (say $T_1$ and $T_2$) with a rooted agreement subtree of size $\Theta(    \log\log n   )  $.  Let the size of this agreement subtree be $w =  \Theta(     \log\log(n)    ) $. 
Then, by Proposition \ref{propMASMAAF}, $R( T_1, T_2  )   \le n -  w    $. 
Therefore, by Proposition \ref{propTrivBound}, $R(T_1,T_2,T_3) \le (n-w) + (n-2) = 2n-2 -  \Theta(    \log\log n     ) $.

\end{proof}

A natural question is: what kind of three trees would lead to the largest reticulation number?
In the two-tree case, the reticulation number of $n-2$ is achieved only for two caterpillar trees according to Proposition~\ref{prop_2tree_case}. One may think the maximum reticulation number for three trees can be obtained when all three trees are caterpillar trees. 
It is unknown whether this is the case. However, we can show that  the reticulation number is much smaller than  $2n-2-\Theta(     \log\log(n) ) $  for three caterpillar trees on the same $n$ taxa.

\begin{proposition}
\label{prop3Caterpillar}
Let $T_1, T_2$ and $T_3$ be three caterpillar trees with $n$ taxa. Then,

\[  R(T_1,T_2,T_3) \le 2n- n^{\frac{1}{3}}-2  \]
\end{proposition}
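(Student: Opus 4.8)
The plan is to exploit the structure of caterpillar trees together with Proposition~\ref{propLTS}, building a tree-child network directly from well-chosen lineage taxon strings. Recall that a caterpillar tree $T_j$ is encoded by a permutation of the taxa (its leaf order $\ell_1\ell_2\cdots$). Fix an ordering $\prec$ on $\mathcal{X}$ and compute, for each taxon $x_i$, its LTS $S_{ij}$ in each $T_j$. In a caterpillar tree the LTS of $x_i$ is simply the contiguous block of taxa that lie ``above'' $x_i$ but below the next already-used leaf --- so the three strings $S_{i1}, S_{i2}, S_{i3}$ are three subsequences of the ordered taxon set $\{x_i,\dots,x_n\}$, each ending in $x_i$. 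By Proposition~\ref{propLTS}, if $S_i$ is any common super-sequence of $S_{i1},S_{i2},S_{i3}$ in which $x_i$ occurs only last, then we get a tree-child network displaying all three trees with reticulation number $\sum_i(|S_i|-1)-(n-1)$. A trivial choice gives $|S_i|\le |S_{i1}|+|S_{i2}|+|S_{i3}|$, recovering roughly the $2(n-2)$ bound; the point is to do better.

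The key step is to choose the ordering $\prec$ cleverly so that the LTSs are short on average, or at least that the shortest common super-sequences are short. Here is the idea I would push. Order the taxa as $x_1\prec x_2\prec\cdots\prec x_n$ to be the leaf order of $T_1$ read from the deepest leaf upward; then in $T_1$ every LTS $S_{i1}$ has length exactly $1$ (each $x_i$ sits just above $x_{i+1}$), so $T_1$ contributes nothing. The remaining cost is $\sum_i(|S_i|-1)-(n-1)$ where now $S_i$ is a shortest common super-sequence of the two strings $S_{i2},S_{i3}$ with $x_i$ forced to the end. Since $|S_i|\le |S_{i2}|+|S_{i3}|-1$ and $\sum_i(|S_{ij}|-1)\le n-1$ for each $j$, this already gives $R\le 2(n-2)$. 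To gain the $n^{1/3}$ term one observes that a shortest common super-sequence of two strings of total length $L$ over an alphabet can be much shorter than $L$ whenever the two strings share a long common subsequence. I would argue that among the $\binom{n}{3}$-many local windows, the two permutations $T_2,T_3$ (restricted to $\{x_i,\dots,x_n\}$) must share agreement somewhere, and more quantitatively: any two permutations of an $n$-element set contain a common subsequence (monotone in a common refinement) of length $\ge \sqrt{n}$, hence by a dilworth/Erd\H{o}s--Szekeres-type pigeonhole one can harvest, across all windows, a total saving of order $n^{1/3}$. Concretely I expect the bound to come from: pick a taxon $x^*$ realizing a longest common ``run'' in $T_2$ and $T_3$ of length $\approx n^{1/3}$, place those run-taxa consecutively at the bottom of $\prec$, so that one $S_i$ absorbs $n^{1/3}$ symbols of both $S_{i2}$ and $S_{i3}$ simultaneously instead of separately, saving $n^{1/3}-O(1)$ over the naive count; then $R\le 2(n-2)-(n^{1/3}-2) = 2n-n^{1/3}-2$.

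More carefully, I would set up the counting as follows. With $\prec$ chosen so $T_1$'s LTSs are trivial, write $c_j=\sum_i(|S_{ij}|-1)$ for $j=2,3$; then $c_2,c_3\le n-1$ and using $S_i$ a shortest common super-sequence gives $\sum_i(|S_i|-1)\le c_2+c_3-\delta$ where $\delta=\sum_i\bigl((|S_{i2}|-1)+(|S_{i3}|-1)-(|S_i|-1)\bigr)$ measures total overlap. Since $|S_i|-1\ge \max(|S_{i2}|,|S_{i3}|)-1\ge$ length of a longest common subsequence of $S_{i2},S_{i3}$ minus $1$... the precise inequality is $|S_i| = |S_{i2}|+|S_{i3}| - \mathrm{lcs}(S_{i2},S_{i3})$ when the forced-last letter cooperates, so $\delta = \sum_i \mathrm{lcs}(S_{i2},S_{i3})$ up to lower-order corrections. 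Thus $R \le (n-1)+(n-1) - \sum_i \mathrm{lcs}(S_{i2},S_{i3}) - (n-1) = n-1 - \sum_i\mathrm{lcs}(S_{i2},S_{i3})$... which is too strong, so I must be careful --- the correct accounting keeps $R\le 2(n-2)-\sum_i(\mathrm{lcs}(S_{i2},S_{i3})-1)^+$ or similar, and I then need to exhibit an ordering $\prec$ forcing $\sum_i\mathrm{lcs}\ge n^{1/3}$. The main obstacle, and the part I expect to be genuinely delicate, is precisely this last point: choosing $\prec$ (not necessarily equal to $T_1$'s order) so that the overlaps among the three families of LTSs are guaranteed to sum to at least $n^{1/3}$. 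I would attack it by an averaging/pigeonhole argument over all $n!$ orderings, or better, by finding three indices $i<j<k$ forming a common-structure triple in all three caterpillars and recursing, accumulating savings geometrically until the window size drops below $n^{1/3}$; bounding the recursion depth and the per-level saving is where the $n^{1/3}$ exponent is pinned down, and is the step most likely to need a clean combinatorial lemma about common subsequences of several permutations.
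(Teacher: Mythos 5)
Your proposal does not close the argument, and the place where it fails is exactly the step you flag as ``delicate.'' Two concrete problems. First, the intermediate combinatorial claim you lean on --- that any two permutations of an $n$-element set contain a common subsequence of length $\ge\sqrt{n}$ --- is false: the identity and its reversal have a longest common subsequence of length $1$. (Erd\H{o}s--Szekeres gives a \emph{monotone} subsequence of a single permutation, which translates into a long common subsequence with either $\sigma$ or the reversal of $\sigma$, not with $\sigma$ itself.) The fact that actually powers the $n^{1/3}$ bound is a statement about \emph{three} permutations: for any three permutations of $\{1,\dots,n\}$, some \emph{pair} of them has a common subsequence of length at least $n^{1/3}$ (Beame and Huynh-Ngoc, Lemma 5.9, cited in the paper). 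You gesture toward ``a clean combinatorial lemma about common subsequences of several permutations'' but never identify it, and without it the $n^{1/3}$ saving has no source. Second, your accounting through shortest common super-sequences of LTSs is, by your own admission, broken (``which is too strong, so I must be careful''), and the recursion/averaging argument you sketch to guarantee $\sum_i \mathrm{lcs}(S_{i2},S_{i3})\ge n^{1/3}$ is not carried out; as written there is no proof.

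For comparison, the paper's route is much shorter and avoids the LTS machinery entirely for this proposition: a caterpillar tree is a permutation of the taxa; a common subsequence of two such permutations yields a common \emph{rooted} caterpillar subtree of the same size, so the three-permutation lemma gives two trees among $T_1,T_2,T_3$ with a rooted agreement subtree of size $n^{1/3}$; Proposition~\ref{propMASMAAF} then bounds that pair's reticulation number by $n-n^{1/3}$, and Proposition~\ref{propTrivBound} adds $n-2$ for the third tree, giving $2n-n^{1/3}-2$. Your instinct to exploit the permutation encoding of caterpillars and to harvest an overlap between two of the three trees is the right one, but the missing ingredient is precisely the three-permutation LCS lemma, and the pairwise-bound-plus-Proposition~\ref{propTrivBound} combination is a cleaner way to convert that overlap into the stated bound than a direct super-sequence construction.
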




\begin{proof}
We note that any \emph{unlabeled} caterpillar tree of $n$ leaves has only one non-isomorphic topology. 
Here, each caterpillar tree corresponds to a permutation of the $n$ taxa (with labels from $1$ to $n$). The critical observation is, the maximum agreement subtree (MAST) of any two caterpillar trees correspond to the \emph{longest common subsequence} of the two length-$n$ permutations that correspond to the two trees. It is known that,  for any three permutations on the integers from $1$ to $n$, there are two permutations among the three that have a common subsequence of length at least $n^{\frac{1}{3}}$ (\cite{beame2008value}, Lemma 5.9). So by Proposition \ref{propMASMAAF}, there are two trees out of the given three trees whose reticulation number is at most $n-n^{\frac{1}{3}}$.  Thus, by Proposition \ref{propTrivBound},  $R(T_1,T_2,T_3) \le (n-n^{\frac{1}{3}})+(n-2)  = 2n- n^{\frac{1}{3}}-2 $.
\end{proof}

By Proposition~\ref{prop3Caterpillar}, $R(T_1,T_2,T_3) \le 2n-5$ for three caterpillar trees when $n \ge 27$. 
We now show that this trivial bound can be improved to $2n-5$ for any three (not necessarily caterpillar) trees even for $n\geq 5$. 

\begin{proposition}
\label{prop2nminus5}
For any $n\geq 5$ and any three trees $T_1, T_2$ and $T_3$ on the same $n$ taxa, $R(T_1,T_2,T_3) \le 2n-5$. 
\end{proposition}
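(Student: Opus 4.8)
The plan is to prove something slightly stronger than the stated bound, namely that \emph{for any three trees $T_1,T_2,T_3$ on $n\ge 5$ taxa, some pair among them has reticulation number at most $n-3$}. Granting this, one picks such a pair and applies Proposition~\ref{propTrivBound} with $m=3$ and $d=n-3$ to get $R(T_1,T_2,T_3)\le (m-2)(n-2)+d=(n-2)+(n-3)=2n-5$. So the entire job is to exhibit a pair with reticulation number $\le n-3$, and by Proposition~\ref{propMASMAAF} it suffices to exhibit a pair that shares a rooted agreement subtree on three taxa, i.e.\ a common rooted triple.

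If at least one of $T_1,T_2,T_3$ is not a caterpillar, we are done immediately: by Proposition~\ref{prop_2tree_case} a non-caterpillar tree together with any other tree on $n\ge 5$ taxa already has reticulation number $\le n-3$. Hence the remaining content is the combinatorial claim: \emph{among any three caterpillar trees on $n\ge 5$ taxa, two of them share a common rooted triple.} I will use one elementary fact throughout: a caterpillar is recorded by the left-to-right order of its leaves (with the two bottom-cherry leaves tied), the rooted triple it induces on a $3$-set $\{u,v,w\}$ has as outgroup exactly the order-smallest of $u,v,w$, and a rooted triple on three taxa is determined by which taxon is the outgroup.

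I would argue by contradiction, assuming no two of $T_1,T_2,T_3$ share a rooted triple. Let $x$ be the leaf child of the root of $T_1$ (the order-minimum of $T_1$), so $T_1$ induces $(x\mid pq)$ on every $\{x,p,q\}$. If $x$ were not in the bottom cherry of $T_2$ it would have at least two taxa strictly below it in $T_2$, producing a triple $(x\mid pq)$ also induced by $T_2$, a contradiction; hence $x$ is in the bottom cherry of $T_2$, and likewise of $T_3$. Applying the same to the root-leaves $y$ of $T_2$ and $z$ of $T_3$, the trees are forced into the rigid shape
\[
T_1 = C(x\cdots\{y,z\}),\qquad T_2 = C(y\cdots\{x,z\}),\qquad T_3 = C(z\cdots\{x,y\}),
\]
with $x,y,z$ pairwise distinct, each bottom cherry being exactly the pair of root-leaves of the other two trees.

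It then remains to contradict this configuration, and this is the only step where $n\ge 5$ enters. Put $E=\{\text{all taxa}\}\setminus\{x,y,z\}$, so $|E|=n-3\ge 2$, and let $\sigma_1,\sigma_2,\sigma_3$ be the orders in which the taxa of $E$ appear in $T_1,T_2,T_3$. For any $\{p,q\}\subseteq E$, on the $3$-set $\{y,p,q\}$ the outgroup is $y$ in $T_2$ but is $\min_{\sigma_1}\{p,q\}$ in $T_1$ and $\min_{\sigma_3}\{p,q\}$ in $T_3$ (since $y$ sits in the bottom cherry of $T_1$ and $T_3$); disjointness of the triple sets of $T_1$ and $T_3$ forces $\min_{\sigma_1}\{p,q\}\ne\min_{\sigma_3}\{p,q\}$, and as this holds for every pair, $\sigma_3=\overline{\sigma_1}$ (the reverse order). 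Running the same argument on $\{z,p,q\}$ (using that $T_1,T_2$ share no triple) gives $\sigma_2=\overline{\sigma_1}$, and on $\{x,p,q\}$ (using $T_2,T_3$) gives $\sigma_2=\overline{\sigma_3}$; combining, $\sigma_1=\sigma_2=\overline{\sigma_1}$, impossible for a linear order on $|E|\ge 2$ distinct elements. This contradiction proves the claim, hence the proposition. The main obstacle is precisely this last reversal argument together with pinning down the rigid shape while handling the bottom-cherry tie correctly; consistently, the statement fails at $n=4$, where $|E|\le 1$ and the rigid configuration genuinely occurs.
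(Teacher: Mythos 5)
Your proposal is correct, and its overall skeleton is the same as the paper's: reduce to exhibiting a pair among $T_1,T_2,T_3$ sharing a rooted agreement subtree on $3$ taxa, invoke Proposition~\ref{propMASMAAF} to get $R\le n-3$ for that pair, handle the non-caterpillar case via Proposition~\ref{prop_2tree_case}, and finish with Proposition~\ref{propTrivBound}. Where you genuinely diverge is the combinatorial core for three caterpillars. The paper restricts all three caterpillars to an arbitrary $5$-taxon subset, writes them as $C(a_1a_2a_3\{a_4a_5\})$ etc., and applies the pigeonhole principle to the six ``top-two'' leaves $a_1,a_2,b_1,b_2,c_1,c_2$ inside a $5$-element set: two of them from different trees coincide, forcing those two trees to share a triple $(x,(y,z))$ with $x$ among the top two and $y,z$ among the bottom three of both. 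You instead argue by contradiction, showing that if no pair shares a triple then the root-leaf of each tree must sit in the bottom cherry of the other two, pinning the trees to the rigid shape $C(x\cdots\{y,z\})$, $C(y\cdots\{x,z\})$, $C(z\cdots\{x,y\})$, and then deriving $\sigma_1=\overline{\sigma_1}$ on the remaining $|E|=n-3\ge 2$ taxa via the three order-reversal constraints. Both arguments are sound and elementary; the paper's is shorter, while yours isolates exactly the extremal configuration and makes transparent why the statement fails at $n=4$ (there $|E|\le 1$ and the rigid configuration is realizable, matching the paper's remark after Proposition~\ref{prop_2tree_case}).
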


\begin{proof} Consider three trees $T_1, T_2, T_3$ on the same set of $n$ taxa. 
If one,  say $T_1$,  is not a caterpillar tree, then $R(T_1, T_2) \le n-3$ (Proposition~\ref{prop_2tree_case}). By Proposition \ref{propTrivBound}, $R(T_1,T_2,T_3) \le (n-2)+(n-3) = 2n-5$. 

Suppose  $T_1, T_2$ and $T_3$ are all caterpillar trees.
 By restricting on a subset of 5 taxa, we can  show that two of the three trees share a common subtree on 3 taxa. To see this,
 we assume that the three caterpillar trees on a set $\mathcal{X}$ of 5 taxa are:
 $C(a_1a_2a_3\{a_4a_5\})$, where $a_1$ is the child of the root and $a_4a_5$ is the unique cherry at the bottom,
 $C(b_1b_2b_3\{b_4b_5\})$, and
 $C(c_1c_2c_3\{c_4c_5\})$.
 Since $|\mathcal{X}|=5$, at least two of $a_1, a_2, b_1, b_2, c_1, c_2$, which from different trees, are identical. Without loss of generality, we may assume that $\{a_1, a_2\}\cap \{b_1, b_2\}$ is non-empty. This implies that 
 $\{a_1, a_2\}\cup \{b_1, b_2\}$ contains at most three different taxa and $\{a_3, a_4, a_5\}\cap \{b_3, b_4, b_5\}$ contains at least two different taxa. Let $x\in \{a_1, a_2\}\cap \{b_1, b_2\}$ and $y, z\in \{a_3, a_4, a_5\}\cap \{b_3, b_4, b_5\}$.
 $(x, (y, z))$ is a common subtree of the first and second trees.
 Therefore, by to Proposition~\ref{propTrivBound}, $R(T_1,T_2,T_3) \le (n-3)+(n-2) = 2n-5$.
 %
\end{proof}


\section{The reticulation number for many trees}

In this section, we consider the general case where  a set of trees on a set of $n$ taxa $\mathcal{X}$,  $\mathcal{T} = \{ T_1, T_2, \ldots, T_m \}$,  is given. 
We  provide a non-trivial upper bound,  derived from an algorithm that constructs a tree-child network that displays $\mathcal{T}$ with a reticulation number smaller than  $(m-1)(n-2)$ for large $m$. We will first describe the algorithm and then analyze the reticulation number of the networks it constructs.


\subsection{Algorithm}
Our algorithm for constructing the tree-child network that displays $\cal T$ consists of two phases.
In the first phase, we consider the spanning subtrees $T_i|_{[1,  t]}$ of the trees in $\mathcal{T}$  over the first $t$ taxa with $t$ to be determined later. We construct a tree-child network $N'$ with a reticulation number of $O(t^3)$ that displays these spanning subtrees. This approach for building the `small' tree-child network is closely related to the method used in \cite{LXZ23}.


In the second phase,  we extend the network $N'$ into a larger network $N$ that displays the trees in $\cal T$. Since each tree $T_i\in {\cal T}$ can be derived from  $T_i|_{[1,  t]}$  by successively inserting the taxa $t+1, t+2, \cdots, n$,  we add a reticulation node with in-degree no greater than $m$ for each of the remaining $n-t$ taxa. More precisely,  Step 9 of the algorithm is implemented as the follows:
\begin{quote}
  Let $N'_0=N'$ and $N'_{k}$ be the tree-child network obtained after adding the 
  taxa $t+1, t+2, \cdots, t+k$ for $k>0$. For each $j$, $N'_k$ displays the spanning subtree $T_j\vert_{[1, t+k]}$.
  Therefore, each edge $e$ of $T_j\vert_{[1, t+k]}$ is mapped to a path $P_e$ consisting of at least one tree edge and some reticulation edges in $N'_{k}$, where $k\geq 0$. 

  Assume $T_j\vert_{[1, t+k+1]}$ is obtained from  $T_j\vert_{[1, t+k]}$ by attaching the leaf $k+1$ onto the edge $e_j$. To construct $N'_{k+1}$ from $N'_{k}$, we attach a reticulation edge onto a chosen tree edge in the corresponding path $P_{e_j}$ for each $j$, as shown in Fig.~\ref{fig:4trees.net}.
\end{quote}




\begin{algorithm}[H]
\caption{(for constructing a network displaying multiple trees in two stages)
}
\label{algoNet2}
\algsetup{indent=2em}
\begin{algorithmic}[1]
\STATE Extract spanning subtrees for the first $t$ taxa from each tree in $\mathcal{T}$, where $t \le m$ is determined by the values of $m$ and $n$ (see the analysis).
\FOR{each taxon $i \in [1,t]$ } 
\STATE Obtain LTS $S_{ij}$ for each taxa $i$ and each tree $T_j$ by ordering the taxa from $1$ to $n$ sequentially.
\STATE Construct a super-sequence $S_i$ of $S_{i1} \ldots S_{im}$. 
\STATE Add $|S_i|$ edges to $N$ so to connect taxon $i$ to the taxa to its right: $i+1, \ldots, t$, following the LTS approach introduced in Section \ref{sectTreeChild}. 
\ENDFOR
\FOR{each taxon $i \in [t+1,n]$} 
\FOR{each tree $T_j$}
\STATE Add into $N$ one edge so that the spanning tree of $T_j$ for the first $i$ taxa is displayed in the network.
\ENDFOR
\ENDFOR
\end{algorithmic}
\end{algorithm}


\begin{figure}[t!]
    \centering
        \centering
        \includegraphics[scale=0.6]{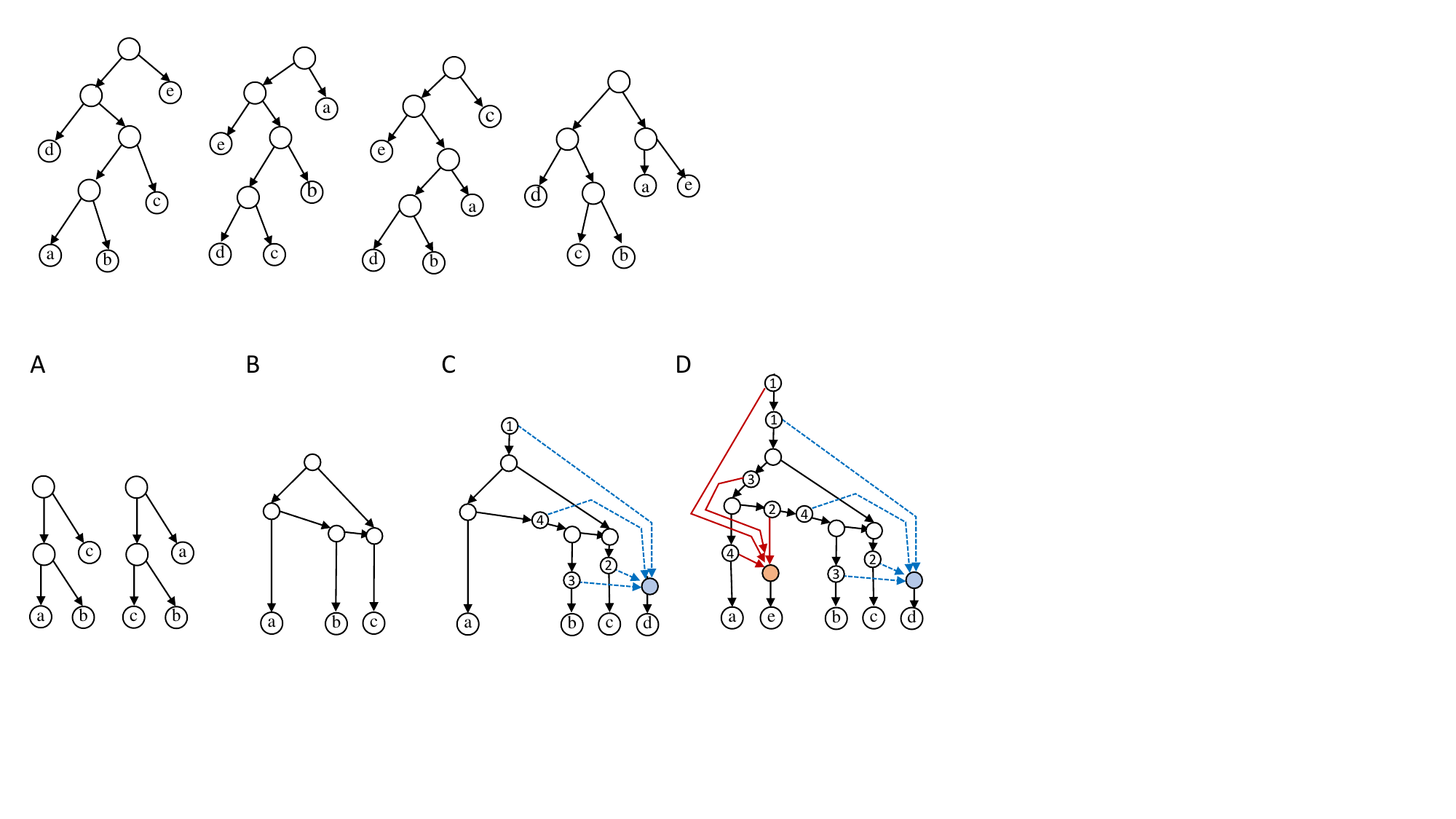}
        \caption{Four trees, with five taxa $a, b, c,d$ and $e$.
      \label{fig:4trees}  }
\end{figure}

\begin{figure}[b!]
        \centering
        \includegraphics[scale=0.7]{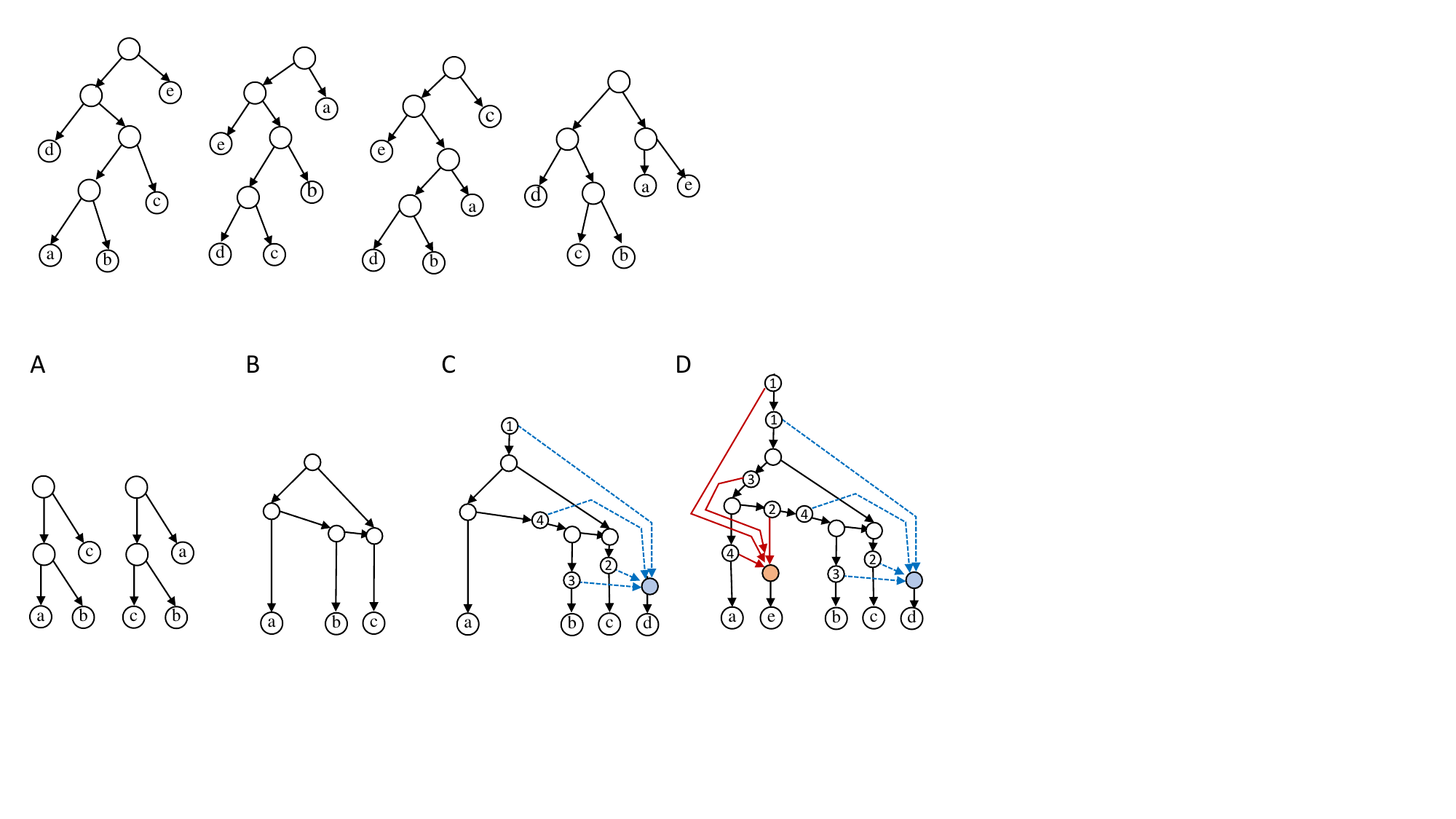}
\caption{ Illustration of construction of a tree-child network for the four trees in Fig. \ref{fig:4trees}. 
(A).  The two subtrees spanned by $\{a, b, c\}$ in the four trees.
(B). A tree-child network that displays the two spanning subtrees, obtained in the first step of  Algorithm 1. 
(C). The network expanded by adding one reticulation node (filled in blue) above the leaf $d$ and one reticulation edge (blue, dashed) per input tree. It contains all the spanning subtrees on $\{a, b,c ,d\}$ of the four trees.
The node labeled with $k$ is the tail of the reticulation edge added to display the spanning subtree of the $k$-th tree (from left to right) for each $k\leq 4$. (D). The resulting network that displays all the trees after the second reticulation node (filled in red) above the leaf $e$ and 4 more reticulation edges (red).
  \label{fig:4trees.net}      }
\end{figure}



We use an example to demonstrate how the algorithm works, where $t$ is set to 3. For the four trees shown in Fig. \ref{fig:4trees}, Algorithm \ref{algoNet2} outputs a network shown in Fig. \ref{fig:4trees.net} that displays these four trees.

There are only two distinct spanning subtrees on $\{a, b, c\}$ in the four trees (Fig.~\ref{fig:4trees.net}A).  In the first phase of the algorithm, a tree-child network (Fig.~\ref{fig:4trees.net}B) over the first $t=3$ taxa $\{a, b, c\}$ is  constructed for the two subtrees (Proposition~\ref{propLTS}).
We use the alphabetic order
$a\prec b\prec c$. The LTSs of $a$ in the left and right subtrees in Fig.~\ref{fig:4trees.net}A are $cba$ and $ba$, 
respectively. Obviously, the shortest common supersequence of $cba$ and $ba$ is $S_a=cba$.  
Since the LTSs of $b$ in the left and right subtrees are $b$ and $cb$, respectively, 
$S_b=cb$ is the shortest common supersequence of these two LTSs. Lastly,
the LTS of $c$ is $c$ for each subtree and thus $S_c=c$ is the shortest common supersequence of the three identical LTSs. $S_a, S_b$ and $S_c$ are the sequences
used in the construction of the tree-child network $N'$ that displays the two spanning subtrees (Fig.~\ref{fig:4trees.net}B).

In the second phase of the algorithm, the remaining two taxa $d$ and $e$ are successively added into  the network $N'$.  For each added taxon,  a reticulation node is introduced and one reticulation edge entering the reticulation node  is added for each tree (Fig.~\ref{fig:4trees.net}C).  The tree edge on which the tail of each added reticulation edge is attached in the case of the taxon being $d$ (resp. $e$)  depends on how to obtain the spanning subtree on $\{a,b, c, d\}$  (resp. $\{a, b, c, d, e\}$) from that on $\{a, b, c\}$ (resp.
$\{a, b, c, d\}$).
Even though the network obtained in this manner may not be the one with the smallest number of reticulation edges, it allows us to derive an improved upper bound using a simple analysis in the next section. 

\subsection{Analysis}

We now estimate the reticulation number of the network constructed by the algorithm.
For taxon $1$ (the first), we  come up with a short string $S_1$ over $\{x_2, \ldots, x_t\}$ that is a common super-sequence of the LTSs of $x_1$ on the spanning trees of each $T_i$ ($1 \le i \le m$). In the worst case, the length of $S_1$ is $(t-1)^2$ (one way to achieve this is simply concatenating $t-1$ copies of string $x_2x_3 \ldots x_t$, as the LTS for $x_1$ in each tree contains $x_i$ at most once for each $2\leq i\leq t$ (Lemma 1, \cite{BZ24}; also see \cite{LXZ23}).
More generally, for each taxon $t'$ ($1 \le t' \le t$), we can construct a common super-sequence $S_{t'}$ of the LTSs of $x_{t'}$ in the spanning subtrees $T_i\vert_{[1, t]}$ such that $\vert S_{t'}\vert \leq (t-t')^2$. 
Therefore, the total number of edges in the constructed tree-child network that displays the spanning subtrees $T_j\vert_{[1, t]}$ is at most: 
$$ \sum_{i=1}^t  (t-i)^2 = \sum_{i=1}^{t-1} i^2 = \frac{  (t-1)t(2t-1)  }{6}.$$

In the second phase,  for each taxon $x_j$ ($t+1 \le j \le n$), we need  to add at most one reticulation edge for each $T_j$.  

In summary, the reticulation number in the network is:
\begin{align}
\label{equPartialNet}
  R(\mathcal{T}) \le  \frac{  (t-1)t(2t-1)  }{6}   + (n-t) m  - (n-1)  = \frac{2t^3 - 3 t^2 +(1-6m)t}{6}  + (m-1)n +1,
\end{align}
where $n-1$ is subtracted because of the minus one term in the definition of the reticulation number in  Eq.~(\ref{defRN}) (Proposition~\ref{propLTS}).
As a function of $t$, the right-hand side of Inequality~(\ref{equPartialNet}) decreases  when $t < t_0 = \frac{1}{2} + \sqrt{  m + \frac{1}{12}  }$ and increases  when $t>t_0$, and  is therefore minimized at $t=t_0$. 
We have the following two cases.


\subsection*{The case $t_0 \le n$}

In this case,  $m \le  \frac{1}{4}  (2n-1)^2 - \frac{1}{12}= n(n-1) +\frac{1}{6} $. This is the case where the number of trees $m$ is not very large. 
In this case, setting
$t= \lceil t_0\rceil \approx \sqrt{m} + 1   $, we obtain:
\begin{eqnarray*} 
R(\mathcal{T}) &\le &    (m-1)n + 1 -   \sqrt{m}  (   \frac{2}{3}m + \frac{1}{2} \sqrt{m} -\frac{1}{6} )\\
&= & 
(m-1)(n-2) - \left[ \sqrt{m} \left(  \frac{2}{3}m - \frac{3}{2} \sqrt{m} - \frac{1}{6}  \right)  + 1\right],     
\end{eqnarray*}
%
implying an $\Theta(m\sqrt{m})$ improvement over the known $(m-1)(n-2)$ bound.

\subsection*{The case $t_0 \geq n$}

In this case,  $m \geq  \lceil n(n-1) + \frac{1}{6}  \rceil =  n(n-1)+1$, that is,  $m  = \Omega( n^2 )$.
Note that the right-hand side of Inequality \ref{equPartialNet} is decreasing when $t\leq t_0$ and $t$ cannot be larger than $n$. So by letting $t=n$, we obtain the best bound:

\begin{eqnarray*} R(\mathcal{T}) &\le&   \frac{  (n-1)n(2n-1)  }{6}     - (n-1) \\
 & = & n(n-1)(n-2) -  \frac{ (n-1)(  4n^2-11n+6  ) }{6}   \\ 
 &\leq & (m-1)(n-2)  - \frac{ (n-1)(  4n^2-11n+6  ) }{6} 
\end{eqnarray*}
implying an improvement of $\Theta(n^3)$ in the reticulation number. 




To summarize, we have the following result.

\begin{theorem}
\label{theoremLargeNumTrees}
For a set $\mathcal{T}$  of $m$ trees  on $n$ taxa, 

\begin{equation*}
    R(  \mathcal{T} )   \le    (m-1)(n-2) -    \begin{cases}
 \Theta(  m \sqrt{m} )   & \text{if }  m \le n(n-1) +\frac{1}{6}, \\
\Theta( n^3)  &\text{otherwise.}
\end{cases}
\end{equation*}

\end{theorem}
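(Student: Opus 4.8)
The plan is to package the analysis already carried out in the ``Algorithm'' and ``Analysis'' subsections into a single clean argument driven by Proposition~\ref{propLTS}. First I would fix the ordering $x_1 \prec x_2 \prec \cdots \prec x_n$ on $\mathcal{X}$ and, for a parameter $t \le \min(m,n)$ to be optimized, restrict attention to the spanning subtrees $T_j|_{[1,t]}$. The key structural input is that for each taxon $x_{t'}$ with $1 \le t' \le t$, the LTS of $x_{t'}$ in any one tree is a string over $\{x_{t'+1},\dots,x_t\}$ in which each symbol occurs at most once (Lemma~1 of \cite{BZ24}; see also \cite{LXZ23}); hence concatenating $t-t'$ copies of the word $x_{t'+1}x_{t'+2}\cdots x_t$ gives a common super-sequence $S_{t'}$ of all $m$ LTSs of $x_{t'}$, with $|S_{t'}| \le (t-t')^2$, and $x_{t'}$ may be appended once at the end without changing the asymptotics. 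Summing $\sum_{t'=1}^{t}(t-t')^2 = \frac{(t-1)t(2t-1)}{6}$ bounds the total length of the super-sequences used in Phase~1, so by Proposition~\ref{propLTS} the resulting tree-child network $N'$ displays every $T_j|_{[1,t]}$ with reticulation number $\frac{(t-1)t(2t-1)}{6}-(t-1)$.

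Next I would handle Phase~2: each $T_j$ is obtained from $T_j|_{[1,t]}$ by successively attaching taxa $t+1,\dots,n$, and since $N'_k$ (the network after attaching $t+1,\dots,t+k$) displays $T_j|_{[1,t+k]}$, each edge of that subtree corresponds to a path in $N'_k$ containing a tree edge, onto which a new reticulation edge can be hung to realize the next attachment. Thus attaching taxon $x_j$ for $t+1 \le j \le n$ costs at most one new reticulation edge per tree, i.e. at most $m$ per taxon and $(n-t)m$ in total. Combining with Phase~1 and accounting for the $-(n-1)$ correction term from the definition~(\ref{defRN}), I arrive at Inequality~(\ref{equPartialNet}): $R(\mathcal{T}) \le \frac{2t^3-3t^2+(1-6m)t}{6} + (m-1)n + 1$.

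The final step is the optimization over $t$. Viewing the right-hand side as a cubic in $t$, its derivative vanishes at $t_0 = \tfrac12 + \sqrt{m + \tfrac1{12}}$, and the bound is decreasing for $t < t_0$; but $t$ is also constrained by $t \le n$ (we cannot restrict to more than $n$ taxa). If $t_0 \le n$, equivalently $m \le n(n-1) + \tfrac16$, I take $t = \lceil t_0 \rceil \approx \sqrt{m}+1$ and substitute; the $(m-1)n+1$ term recombines with the leading negative $-\Theta(m^{3/2})$ contribution to give $(m-1)(n-2)$ minus a $\Theta(m\sqrt{m})$ term. If instead $t_0 > n$, i.e. $m = \Omega(n^2)$, the cubic is still decreasing on $[1,n]$, so the optimal admissible choice is $t=n$, giving $R(\mathcal{T}) \le \frac{(n-1)n(2n-1)}{6}-(n-1)$; since $m \ge n(n-1)+1$ this is at most $(m-1)(n-2) - \Theta(n^3)$ after the elementary manipulation shown in the excerpt. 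The two cases together yield the stated piecewise bound.

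I expect the main obstacle to be not any single inequality but making the Phase~2 accounting airtight: one must verify carefully that at every stage the network remains tree-child and that a single reticulation edge per tree genuinely suffices to extend from $T_j|_{[1,t+k]}$ to $T_j|_{[1,t+k+1]}$ for all $j$ simultaneously without the choices interfering — this is exactly what the boxed description of Step~9 and Figure~\ref{fig:4trees.net} are designed to establish, so the work is in confirming that the path $P_{e_j}$ always contains a tree edge available to host the new edge. The cubic optimization and the case split are then routine calculus and algebra.
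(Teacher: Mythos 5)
Your proposal is correct and follows essentially the same route as the paper: the LTS/super-sequence construction of Proposition~\ref{propLTS} with $|S_{t'}|\le (t-t')^2$ for the first $t$ taxa, one reticulation edge per tree for each remaining taxon, and the same cubic optimization over $t$ with the case split at $t_0=\tfrac12+\sqrt{m+\tfrac1{12}}$ versus the constraint $t\le n$. The obstacle you flag (verifying that each path $P_{e_j}$ contains a tree edge to host the new reticulation edge) is exactly the point the paper addresses in its description of Step~9.
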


\section{Concluding discussions}

In this paper, we present several upper bounds on the reticulation number required to display multiple trees in a phylogenetic network. Our results improve upon the previously known upper bound of $(m-1)(n-2)$  in a non-trivial way for three or more trees on $n$ taxa. In \cite{THREETREES23}, the authors inquired about the comparison between the so-called multi-labeled (MUL) tree model and the phylogenetic network model for tree display.  The question was whether the phylogenetic network model could offer a more efficient representation by using significantly fewer reticulations than the trivial upper bound of $(m-1)(n-2)$.  Our results indicate that this trivial upper bound can indeed be improved using the recent results and techniques developed in the study of tree-child networks,  demonstrating that reticulations can be more effective than multiple labels in representing multiple trees. Several open problems related to reticulation numbers for multiple trees remain for future investigation.

First, we note that the bound in Theorem \ref{theorem3treebound} is only asymptotic for larger $n$. 
It  does not give useful bounds for small $n$. Moreover, the upper bound $2n-n^{1/3}-2$ is proved for three caterpillar trees on $n$ taxa. 
Can the upper bound of  $2n-\Omega(\log n  )$ or  even $2n-\Omega(n^d )$ ($d\leq 1$)  be proved for any three  trees on $n$ taxa?

Another interesting  research direction is the study of special cases involving three trees. Proposition 4.4 demonstrates that a significantly sharper bound can be established for three caterpillar trees compared to the trivial bound. However, our small-scale empirical study using the PIRNs program  \cite{WUMIRRN16} suggests that three caterpillar trees do not yield the maximum reticulation number. Instead, it appears that two caterpillar trees combined with one fully symmetric tree often result in a higher reticulation number than three caterpillar trees alone. Therefore, we conjecture that the maximum reticulation number for $2^k$ taxa,  is achieved with two caterpillar trees plus the fully balanced tree in which the leaves are equally distant from the root.


Can the  bound given in Theorem \ref{theoremLargeNumTrees}  be further improved significantly?  One plausible approach is improving the constructive proof of Theorem \ref{theoremLargeNumTrees}:  finding a way to construct some phylogenetic network with a special structure that uses a smaller number of edges than the one in the proof of Theorem \ref{theoremLargeNumTrees}.

 It is easy to see that two conjugate caterpillar trees on $n$ taxa do not share a common subtree with 3 taxa. However, in Sections~\ref{Sect3} and \ref{sect4}, we have observed for any three trees on $n$ taxa,  there always exist two trees out of these trees that share a common subtree on $3$ taxa (see Proposition~\ref{prop2nminus5}). This raises the following Ramsey-type problem for phylogenetic trees:
\begin{quote}
    Given integers $k$ and $n$ such that $n>k$,
     what is the minimum value of $M(k, n)$ such that for any $M(k, n)$ phylogenetic trees on $n$ taxa, there always exist two of them sharing a common rooted subtree on a subset of $k$ taxa?
\end{quote}
Note that there are $m_k=\frac{(2k-2)!}{2^{k-1}(k-1)!}$ trees on a set of $k$ taxa. Thus, for any $m>m_k$ phylogenetic trees on $n$ taxa, there exist two of them that share a common subtree on $X$ for any set $X$ of $k$ taxa by the Pigeonhole principle.  Proposition~\ref{prop_2tree_case} and Proposition~\ref{prop2nminus5} imply that $M(3, n)=3$ for $n \ge 5$. 



\subsection*{Acknowledgments} 

Research is partly supported by U.S. NSF grant IIS-1909425 (YW) and Singapore Ministery of Education {A-8001951-00-00 } (LZ). The work was started while YW was  attending the Mathematics of Evolution program held at the Institute for Mathematical Sciences,  National University of Singapore,  in September 2023.

\subsection*{Declarations}
The authors have no competing interests to declare that are relevant to the content of this article.

\subsection*{Data availability statements}
This article has no associated data.

%
\bibliographystyle{elsarticle-harv}

\begin{thebibliography}{21}
\expandafter\ifx\csname natexlab\endcsname\relax\def\natexlab#1{#1}\fi
\expandafter\ifx\csname url\endcsname\relax
  \def\url#1{\texttt{#1}}\fi
\expandafter\ifx\csname urlprefix\endcsname\relax\def\urlprefix{URL }\fi

\bibitem[{Albrecht et~al.(2012)Albrecht, Scornavacca, Cenci, and
  Huson}]{ASCH2012}
Albrecht, B., Scornavacca, C., Cenci, A., Huson, D.~H., 2012. Fast computation
  of minimum hybridization networks. Bioinformatics 28, 191--197.


\bibitem[{Baroni et~al.(2005)Baroni, Grunewald, Moulton and Semple}]{BGMS05}
Baroni, M., Grunewald, S., Moulton, V., Semple, C., 2005. Bounding the Number of Hybridisation Events for a Consistent Evolutionary History. J. Math. Biol., 51, 171--182.

\bibitem[{Beame and Huynh-Ngoc(2008)}]{beame2008value}
Beame, P., Huynh-Ngoc, D.-T., 2008. On the value of multiple read/write streams
  for approximating frequency moments. In: The 49th Annual IEEE Symposium on
  Foundations of Computer Science. IEEE, pp. 499--508.


\bibitem[{Bordewich and Semple(2007)}]{SEMPLENPC07}
Bordewich, M., Semple, C., 2007. Computing the minimum number of hybridization
  events for a consistent evolutionary history. Discrete Applied Mathematics
  155, 914--928.

\bibitem{BZ24}
Bulteau, L.,  Zhang, L., 2024. The tree-child network inference problem for line trees and the shortest common supersequence problem for permutation strings. Journal of Computer and System Sciences 144, 103546.  

\bibitem[{Cardona et~al.(2009)Cardona, Rossello, and Valiente}]{Cardona_09b}
Cardona, G., Rossello, F., Valiente, G., 2009. Comparison of tree-child
  phylogenetic networks. IEEE/ACM Trans Comput. Biol. and Bioinform.  10, 552--569.

\bibitem[{Chen and Wang(2012)}]{ZW2012}
Chen, Z., Wang, L., 2012. Algorithms for reticulate networks of multiple
  phylogenetic trees. IEEE/ACM Transactions on Computational Biology and
  Bioinformatics 9 (2), 372--384.


\bibitem[{Gusfield(2014)}]{GUSBOOK14}
Gusfield, D., 2014. ReCombinatorics: The Algorithmics of Ancestral
  Recombination Graphs and Explicit Phylogenetic Networks. MIT press,
  Cambridge, MA.

\bibitem{Huson2016}
 Huson, D. H., Linz, S., 2016. Autumn algorithm—computation of hybridization networks for realistic phylogenetic trees. IEEE/ACM Trans. Comput. Biol. and Bioinform.  15, 398--410. 

\bibitem[{Huson et~al.(2010)Huson, Rupp, and Scornavacca}]{HRS10}
Huson, D.~H., Rupp, R., Scornavacca, C., 2010. Phylogenetic Networks: Concepts,
  Algorithms and Applications. Cambridge University Press, Cambridge, U.K.

\bibitem[{Markin(2020)}]{MARKIN2020612}
Markin, A., 2020. On the extremal maximum agreement subtree problem. Discrete
  Applied Mathematics 285, 612--620.

\bibitem[{Mirzaei and Wu(2016)}]{WUMIRRN16}
Mirzaei, S., Wu, Y., 2016. Fast construction of near parsimonious hybridization
  networks for multiple phylogenetic trees. IEEE/ACM Transactions on
  Computational Biology and Bioinformatics 13, 565--570.

\bibitem[{Semple(2007)}]{SempleGS07}
Semple, C., 2007. Hybridization networks. In: Gascuel, O., Steel, M. (Eds.),
  Reconstructing Evolution: New Mathematical and Computational Advances.
  Oxford, pp. 277--309.

\bibitem{Iersel2022}
van Iersel, L., Janssen, R., Jones, M., Murakami, Y. and Zeh, N., 2022. A practical fixed-parameter algorithm for constructing tree-child networks from multiple binary trees. Algorithmica 84, 917--960.

\bibitem[{van Iersel et~al.(2023)van Iersel, Jones, and Weller}]{THREETREES23}
van Iersel, L., Jones, M., Weller, M., 2023. When three trees go to war. HAL
  Open Science.

\bibitem[{Whidden et~al.(2013)Whidden, Beiko, and Zeh}]{ZEHRET13}
Whidden, C., Beiko, R.~G., Zeh, N., 2013. Fixed-parameter algorithms for
  maximum agreement forests. SIAM Journal on Computing 42~(4), 1431--1466.
\newline\urlprefix\url{https://doi.org/10.1137/110845045}



\bibitem[{Wu(2010)}]{WURN10}
Wu, Y., 2010. Close lower and upper bounds for the minimum reticulate network
  of multiple phylogenetic trees. Bioinformatics (supplement issue for ISMB
  2010 proceedings) 26, 140--148.

\bibitem[{Wu(2013)}]{WURNJCB13}
Wu, Y., 2013. An algorithm for constructing parsimonious hybridization networks
  with multiple phylogenetic trees. Journal of Computational Biology 20,
  792--804.

\bibitem[{Wu and Wang(2010)}]{WUWANG2010}
Wu, Y., Wang, J., 2010. Fast computation of the exact hybridization number of
  two phylogenetic trees. In: Proceedings of International Symposium on
  Bioinforamtics Research and Applications (ISBRA) 2010. Springer-Verlag LNCS,
  Berlin, Germany, pp. 203--214.

\bibitem[{Wu and Zhang(2023)}]{TCLBJCB23}
Wu, Y., Zhang, L., 2023. Computing the bounds of the number of reticulations in
  a tree-child network that displays a set of trees. Journal of Computational
  Biology (in press). \urlprefix\url{https://doi.org/10.1089/cmb.2023.0309}

\bibitem[{Zhang et~al.(2023)Zhang, Abhari, Colijn, and Wu}]{LXZ23}
Zhang, L., Abhari, N., Colijn, C., Wu, Y., 2023. A fast and scalable method for
  inferring phylogenetic networks from trees by aligning lineage taxon strings.
  Genome Research 33, 1053--1060.

\end{thebibliography}

\end{document}